\title{Linear Complexity of Geometric Sequences Defined by Cyclotomic Classes and
Balanced Binary Sequences Constructed by the Geometric Sequences
\footnote{A preliminary version \cite{Tsuchiya-Ogawa-Nogami-Uehara} of this paper was presented at
the Eighth International Workshop on Signal Design and its Applications in Communications (IWSDA 2017).}
}
\author{Kazuyoshi Tsuchiya \footnote{The author is with the Koden Electronics Co., Ltd., Ota-ku, 146-0095 Japan} \and
Chiaki Ogawa \footnote{The authors are with the Graduate School of Natural Science and Technology, Okayama University,
Okayama-shi, 700-8530 Japan}  \and
Yasuyuki Nogami \footnotemark[3]  \and
Satoshi Uehara \footnote{The author is with the Faculty of Environmental Engineering, the University of Kitakyushu,
Kitakyushu-shi, 808-0135 Japan}
\date{}}
\newtheorem{theorem}{Theorem}
\newtheorem{proposition}[theorem]{Proposition}
\newtheorem{lemma}[theorem]{Lemma}
\newtheorem{corollary}[theorem]{Corollary}
\newtheorem{example}[theorem]{Example}
\newtheorem{remark}[theorem]{Remark}
\begin{document}
\maketitle
\begin{abstract}
Pseudorandom number generators are required to generate pseudorandom numbers
which have good statistical properties as well as unpredictability in cryptography.
An m-sequence is a linear feedback shift register sequence with maximal period over a finite field.
M-sequences have good statistical properties, however we must nonlinearize m-sequences for cryptographic purposes.
A geometric sequence is a sequence given by applying a nonlinear feedforward function to an m-sequence.
Nogami, Tada and Uehara proposed a geometric sequence
whose nonlinear feedforward function is given by the Legendre symbol,
and showed the period, periodic autocorrelation and linear complexity of the sequence.
Furthermore, Nogami et al. proposed a generalization of the sequence, and showed the period and periodic autocorrelation.
In this paper, we first investigate linear complexity of the geometric sequences.
In the case that the Chan--Games formula which describes linear complexity of geometric sequences does not hold,
we show the new formula by considering the sequence of complement numbers, Hasse derivative and cyclotomic classes.
Under some conditions, we can ensure that the geometric sequences have a large linear complexity
from the results on linear complexity of Sidel'nikov sequences.
The geometric sequences have a long period and large linear complexity under some conditions,
however they do not have the balance property.
In order to construct sequences that have the balance property,
we propose interleaved sequences of the geometric sequence and its complement.
Furthermore, we show the periodic autocorrelation and linear complexity of the proposed sequences.
The proposed sequences have the balance property,
and have a large linear complexity if the geometric sequences have a large one.
\end{abstract}
%\begin{keywords}
%pseudorandom number generator, geometric sequence, balance property, interleaved sequence.
%\end{keywords}

\section{Introduction}

Pseudorandom number generators play an important role to ensure that cryptography is safely used.
Therefore, pseudorandom number generators are required to generate pseudorandom numbers
which have good statistical properties as well as unpredictability in cryptography.

An m-sequence \cite{Golomb,Goresky-Klapper} is well-distributed sequence with long period.
However, we must nonlinearize this sequence for cryptographic purposes.
A geometric sequence \cite{Chan-Games} is a sequence given by applying a nonlinear feedforward function to an m-sequence,
such as GMW sequences \cite{Gordon-Mills-Welch,Scholtz-Welch} and cascaded GMW sequences \cite{Klapper-Chan-Goresky,Gong1996}.

Nogami, Tada and Uehara \cite{Nogami-Tada-Uehara} proposed a geometric sequence
whose nonlinear feedforward function is given by the Legendre symbol
(this sequence is referred to as the NTU sequence).
They showed the period, periodic autocorrelation and linear complexity of the NTU sequence.
The studies on generalization of the sequences have made progress in recent years
\cite{Arshad-Nogami-Ogawa-Ino-Uehara-Morelos-Zaragoza-Tsuchiya,Ogawa-Arshad-Nogami-Uehara-Tsuchiya-Morelos-Zaragoza,Nogami-Uehara-Tsuchiya-Begum-Ino-Morelos-Zaragoza,Nasima-Nogami-Uehara-Moleros-Zaragoza,Kodera-Miyazaki-Khandaker-Arshad-Nogami-Uehara,Arshad-Miyazaki-Nogami-Uehara-Morelos-Zaragoza,Heguri-Nogami-Uehara-Tsuchiya,Arshad-Miyazaki-Heguri-Nogami-Uehara-Morelos-Zaragoza,Tsuchiya-Nogami-Uehara}.
In particular, Nogami et al. \cite{Nogami-Uehara-Tsuchiya-Begum-Ino-Morelos-Zaragoza} proposed a generalization of
the NTU sequence (this sequence is referred to as the generalized NTU sequence),
and showed the period and periodic autocorrelation.

In this paper, we first investigate the linear complexity of the generalized NTU sequences.
Chan and Games \cite{Chan-Games} obtained a formula which describes the linear complexity of geometric sequences
by that of sequences with a shorter period.
However, under certain conditions, we can not apply the formula to the generalized NTU sequences.
In these cases, we show a similar formula.
This formula induces some conditions to have a large linear complexity
from the results on linear complexity of Sidel'nikov sequences.

Generalized NTU sequences have a long period and large linear complexity under some conditions,
however they do not have the balance property.
In order to construct sequences that have the balance property,
we propose interleaved sequences of the generalized NTU sequence and its complement
as in the case of NTU sequences \cite{Tsuchiya-Nogami-Uehara}.
Furthermore, we show the periodic autocorrelation and linear complexity of the interleaved sequences.
The interleaved sequences have the balance property by the interleaved structure.

This paper is organized as follows:
Sect.\,\ref{section:Preliminaries} describes some preliminaries.
Sect.\,\ref{section:Generalized NTU Sequences} introduces the generalized NTU sequences.
Sect.\,\ref{section:Linear Complexity of Generalized NTU Sequences} considers
the linear complexity of the generalized NTU sequences.
Sect.\,\ref{section:Interleaved Sequences of Generalized NTU Sequences} proposes the interleaved sequences of
the generalized NTU sequence and its complement and
show the period, periodic autocorrelation and linear complexity of the proposed sequences.
Finally, Sect.\,\ref{section:Conclusions} concludes the paper.

We give some notations.
For a prime number $p$ and an integer $a$, $(a / p)$ denotes the Legendre symbol.
For a prime power $q$, $\mathbb{F}_q$ denotes the finite field with $q$ elements.
For the extension field $\mathbb{F}_{q^m}$ of degree $m$ of $\mathbb{F}_q$,
$\mathrm{Tr}_{\mathbb{F}_{q^m} / \mathbb{F}_q} : \mathbb{F}_{q^m} \rightarrow \mathbb{F}_q$ denotes the trace map,
namely, $\mathrm{Tr}_{\mathbb{F}_{q^m} / \mathbb{F}_q} ( \alpha ) = \alpha + \alpha^q + \cdots + \alpha^{q^{m - 1}}
\in \mathbb{F}_q$ for $\alpha \in \mathbb{F}_{q^m}$.
Let $D$ be a subset of $\mathbb{F}_q$.
For $c \in \mathbb{F}_q$, we define $D + c = \{ d + c \mid d \in D \}$ and $c D = \{ c d \mid d \in D \}$.
Let $f ( x ) \in \mathbb{F}_q [ x ]$ be a polynomial over $\mathbb{F}_q$.
$\mathrm{deg} \, f$ denotes the degree of $f ( x )$.
For $\xi \in \overline{\mathbb{F}_q}$,
$\mathrm{ml}_{\xi} \, ( f ( x ) )$ denotes the multiplicity of $\xi$ as zero of $f ( x )$,
where $\overline{\mathbb{F}_q}$ is the algebraic closure of $\mathbb{F}_q$.
For an integer $n > 1$ and an integer $a$, $a \mathrm{~mod~} n$ (resp. $a ~\underline{\mathrm{mod}}~ n$)
denotes the integer $u \in \{ 0, 1, \dots ,n - 1 \}$ (resp. $u \in \{ 1, 2, \dots ,n \}$)
such that $u \equiv a \mathrm{~mod~} n$.

\section{Preliminaries}
\label{section:Preliminaries}

In this section, we introduce some definitions and known results used throughout the paper.

\subsection{Linear Complexity}

Let $\ell$ be a prime number. Let $S = ( S_n )_{n \geq 0}$ be a sequence over $\mathbb{F}_{\ell}$.
The linear complexity $L ( S )$ of $S$ is the length $L$ of the shortest linear recurrence relation
\begin{equation*}
S_{n + L} = a_{L - 1} S_{n + L - 1} + \cdots + a_0 S_n, \quad n \geq 0
\end{equation*}
for some $a_0, \dots ,a_{L - 1} \in \mathbb{F}_{\ell}$,
and the minimal polynomial $m_S ( x )$ of $S$ is the polynomial
$x^L - a_{L - 1} x^{L - 1} - \cdots - a_0  \in \mathbb{F}_{\ell} [ x ]$. 
Assume that $S$ is a periodic sequence of period $N$.
$S ( x )$ denotes the polynomial $S_0 + S_1 x + \cdots + S_{N - 1} x^{N - 1} \in \mathbb{F}_{\ell} [ x ]$.
Then $m_S ( x ) = ( x^N - 1 ) / \mathrm{gcd} \, ( x^N - 1, S ( x ) )$ and
$L ( S ) = N - \deg \mathrm{gcd} \, ( x^N - 1, S ( x ) )$ (cf. \cite[Lemma 8.2.1]{Cusick-Ding-Renvall}).

\subsection{Periodic Cross-Correlation and Periodic Autocorrelation}

Let $S^{( 1 )} = ( S^{( 1 )}_n )_{n \geq 0}$ and $S^{( 2 )} = ( S^{( 2 )}_n )_{n \geq 0}$
be periodic sequences of period $N$ over $\mathbb{F}_2$.
For $\tau \in \{ 0, 1, \dots , N - 1 \}$, $R_{S^{( 1 )}, S^{( 2 )}} ( \tau )$ is defined as
\begin{equation*}
R_{S^{( 1 )}, S^{( 2 )}} ( \tau ) = \sum_{i = 0}^{N - 1} ( - 1 )^{S^{( 1 )}_i + S^{( 2 )}_{i + \tau}}.
\end{equation*}
Note that $S^{( 1 )}_i + S^{( 2 )}_{i + \tau}$ means an addition in $\mathbb{F}_2$
and $i + \tau$ is performed modulo $N$.
For an integer $\tau \not\in \{ 0, 1, \dots , N - 1 \}$,
$R_{S^{( 1 )}, S^{( 2 )}} ( \tau )$ means $R_{S^{( 1 )}, S^{( 2 )}} ( \tau \mathrm{~mod~} N )$ in this paper.
If $S^{( 1 )} \neq S^{( 2 )}$,
then $R_{S^{( 1 )}, S^{( 2 )}} ( \tau )$ is called a periodic cross-correlation of $S^{( 1 )}$ and $S^{( 2 )}$.
If $S := S^{( 1 )} = S^{( 2 )}$, then $R_S ( \tau ) := R_{S, S} ( \tau )$ is called a periodic autocorrelation of $S$.
By the definition,  $R_S ( 0 ) = N$.

\subsection{Geometric Sequences}

Let $q$ be an odd prime power, $m > 1$ an integer, $\omega$ a primitive element in $\mathbb{F}_{q^m}$ and
$R = ( R_n )_{n \geq 0}$ an m-sequence defined as
$R_n = \mathrm{Tr}_{\mathbb{F}_{q^m} / \mathbb{F}_q} ( \omega^n ), \, n \geq 0$.
Put $\nu = ( q^m - 1 ) / ( q - 1 )$ and $g = \omega^{\nu} \in \mathbb{F}_q$.
Let $\ell$ be a prime divisor of $q - 1$ and $\rho : \mathbb{F}_q \rightarrow \mathbb{F}_{\ell}$ a (nonlinear) map.
Let $S = ( S_n )_{n \geq 0}$ be a sequence defined as $S_n = \rho ( R_n ), \, n \geq 0$,
and $s = ( s_n )_{n \geq 0}$ a sequence defined as $s_n = \rho ( g^n ), \, n \geq 0$.
$S$ is called a geometric sequence.
Chan and Games \cite{Chan-Games} showed that the linear complexity of $S$ is described by that of $s$
under certain condition as follows:

\begin{lemma}[\cite{Chan-Games} Theorem]
Assume that $\rho ( 0 ) = 0$. Then
\begin{equation}
L ( S ) = \nu L ( s ).
\label{equation:Chan-Games}
\end{equation}
\label{lemma:Chan-Games}
\end{lemma}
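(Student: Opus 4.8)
The plan is to exploit the ``decimation'' of $S$ at step $\nu$, which ties it directly to the $q$-ary m-sequence $R$. Put $N=q^m-1=\nu(q-1)$ and write each index uniquely as $n=i+j\nu$ with $0\le i<\nu$ and $0\le j<q-1$. Since $\omega^{i+j\nu}=\omega^i g^j$ and $g^j\in\mathbb{F}_q$, we have $R_{i+j\nu}=g^j R_i$, hence $S_{i+j\nu}=\rho(g^j R_i)$. Here the hypothesis $\rho(0)=0$ enters: when $R_i=0$ the $i$-th decimated subsequence $(S_{i+j\nu})_{j\ge 0}$ is identically zero, and when $R_i\neq0$, writing $R_i=g^{k_i}$ gives $S_{i+j\nu}=\rho(g^{j+k_i})=s_{j+k_i}$, a cyclic shift of $s$. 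Passing to generating polynomials, reducing modulo $x^N-1$, and using that the indicator polynomial $P_c(x):=\sum_{R_n=c}x^n$ of a nonzero value $c=g^t$ satisfies $P_c(x)\equiv x^{\nu t}P(x)\pmod{x^N-1}$ (because $R_n=g^t$ iff $R_{n-\nu t}=1$) together with $S(x)=\sum_{c\neq 0}\rho(c)P_c(x)$, one obtains the key congruence
\begin{equation*}
S(x)\equiv P(x)\,s(x^\nu)\pmod{x^N-1},\qquad P(x):=\sum_{\substack{0\le n<N\\ R_n=1}}x^n .
\end{equation*}

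Next I would use $L(S)=N-\deg\gcd(x^N-1,S(x))$. By the substitution rule $\gcd\bigl(f(x^\nu),h(x^\nu)\bigr)=\bigl(\gcd(f,h)\bigr)(x^\nu)$ we get $\deg\gcd(x^N-1,s(x^\nu))=\nu\,\deg\gcd(x^{q-1}-1,s(x))=N-\nu L(s)$. Write $d=\gcd(x^N-1,s(x^\nu))$ and $m=m_s(x^\nu)=(x^N-1)/d$; then both $d$ and the cofactor $s(x^\nu)/d$ are coprime to $m$. Since $m$ annihilates $S$ (each decimated subsequence is annihilated by $m_s$), the inequality $L(S)\le\nu L(s)$ is immediate. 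For equality it suffices to know that $P(x)$ is invertible modulo $m$, i.e.\ $\gcd(P(x),m)=1$: then the congruence above gives
\begin{equation*}
\gcd(x^N-1,S(x))=\gcd\bigl(x^N-1,P(x)s(x^\nu)\bigr)=\gcd\bigl(dm,\,dP(x)(s(x^\nu)/d)\bigr)=d ,
\end{equation*}
hence $L(S)=N-\deg d=\nu L(s)$. As $m_s(x^\nu)\mid x^N-1$, this coprimality is implied by the $\rho$-free statement $\gcd(x^N-1,P(x))=1$ --- equivalently, that the indicator sequence of the value $1$ of the m-sequence $R$ has maximal linear complexity $N$.

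The main obstacle is precisely this last statement. Over $\overline{\mathbb{F}_\ell}$ it asserts $P(\zeta)\neq 0$ for every $\zeta$ with $\zeta^N=1$. For $\zeta=1$ it is immediate, since $P(1)=q^{m-1}\not\equiv 0\pmod\ell$ ($\ell\mid q-1$ forces $\ell\nmid q$). For the other roots I would decimate once more: writing $\nu=\ell^u\nu'$ with $\ell\nmid\nu'$, for $\zeta$ with $\zeta^{\nu'}=1$ one has $\zeta^\nu=1$, so
\begin{equation*}
P(\zeta)=\sum_{r=0}^{\nu'-1}N_r\,\zeta^r,\qquad N_r=\#\{\,0\le i<\nu:\ R_i\neq 0,\ i\equiv r\pmod{\nu'}\,\},
\end{equation*}
and $N_r$ is the number of lines through the origin of $\mathbb{F}_{q^m}$ that are not contained in $\ker\mathrm{Tr}_{\mathbb{F}_{q^m}/\mathbb{F}_q}$ and lie in the coset $\omega^r\langle\omega^{\nu'}\rangle$. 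The required non-vanishing of $\sum_r N_r\zeta^r$ then has to be extracted from the Gauss-sum / cyclotomic-number evaluation of how the hyperplane $\ker\mathrm{Tr}$ distributes among the $\nu'$ cosets of $\langle\omega^{\nu'}\rangle$; this is the genuinely delicate part, and it is exactly here that the m-sequence structure (as opposed to an arbitrary interleaving of shifts of $s$, for which the formula can fail) is indispensable. The remaining roots $\zeta$ with $\zeta^{\nu'}\neq 1$ are handled by the analogous but more standard multiplicative-character computation, once again using that multiplication by the elements of $\mathbb{F}_q^{\times}$ cyclically permutes the level sets $\{n:R_n=c\}$ of $R$.
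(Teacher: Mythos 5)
The paper does not prove this lemma at all: it is quoted verbatim from Chan--Games, so there is no internal proof to compare against. Judged on its own, your reduction is sound and, as far as I can tell, not the classical column-counting argument of Chan and Games but a generating-function variant of it. The identity $S(x)\equiv P(x)\,s(x^\nu)\pmod{x^N-1}$ is correct (and this is where $\rho(0)=0$ is really used, since it kills the term $\rho(0)\sum_{R_n=0}x^n$), the upper bound $L(S)\le\nu L(s)$ is immediate, and you have correctly isolated the one statement that carries all the weight: $\gcd\bigl(x^N-1,P(x)\bigr)=1$, i.e.\ $P(\zeta)\neq0$ for every $N$-th root of unity $\zeta$ in $\overline{\mathbb{F}_\ell}$.

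The gap is that you do not prove that statement; you explicitly defer ``the genuinely delicate part'' to an unspecified Gauss-sum or cyclotomic-number evaluation, and that deferred part is the entire content of the lemma (the upper bound is trivial). The claim is in fact true, so the gap is closable, but two points need care. First, for $\zeta\neq1$ with $\zeta^\nu=1$ one has $P(\zeta)=-Z(\zeta)$, where $Z(x)=\sum_{0\le i<\nu,\;R_i=0}x^i$ is the indicator of the zero columns; these columns form the point set of a hyperplane under the Singer parametrization of $PG(m-1,q)$, i.e.\ a $\bigl(\nu,\tfrac{q^{m-1}-1}{q-1},\tfrac{q^{m-2}-1}{q-1}\bigr)$-difference set, whence $Z(\zeta)Z(\zeta^{-1})=q^{m-2}\neq0$ in characteristic $\ell$ since $\ell\nmid q$. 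This is cleaner and more robust than the coset-distribution count you sketch. Second, for $\zeta$ with $\eta:=\zeta^\nu\neq1$ one gets $P(\zeta)=\sum_{i<\nu}\zeta^i\chi(R_i)$ with $\chi(g)=\eta^{-1}$, and the natural character-orthogonality manipulation produces the relation $(q-1)P(\zeta)=(q-1)\,G(\tilde\chi,\Psi)/G(\bar\chi,\psi)$; since $\ell\mid q-1$, you cannot divide by $q-1$ in characteristic $\ell$, so the computation must be lifted to characteristic zero (Teichm\"uller lifts), cancelled there, and only then reduced modulo a prime above $\ell$, after which $G(\chi)G(\bar\chi)=\chi(-1)q$ shows both Gauss sums are $\ell$-adic units and $P(\zeta)\neq0$. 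Without these two arguments (or some substitute for them) the proposal establishes only $L(S)\le\nu L(s)$.
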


Chan and Games showed (\ref{equation:Chan-Games}) in the case of $\ell = 2$.
In fact, (\ref{equation:Chan-Games}) holds for any divisor $\ell$ of $q - 1$.

\subsection{Cyclotomic Classes}

We recall the definition of cyclotomic classes and difference parameters.
For details, see \cite[13.3]{Cusick-Ding-Renvall}.
Let $p$ be a prime number and $d$ a divisor of $p - 1$. Let $g$ be a primitive element in $\mathbb{F}_p$. The sets
\begin{equation*}
D_0 = \left\{ g^{i d} \, \middle| \, 0 \leq i < \frac{p - 1}{d} \right\}, \,
D_j = g^j D_0, \, 1 \leq j \leq d - 1
\end{equation*}
are called cyclotomic classes of order $d$.
For $a \in \mathbb{F}_p$, the difference parameter is defined as
$d ( i, j ; a ) = \# ( D_i \cap ( D_j - a ) ), \, i, j \in \{ 0, 1, \dots , d - 1 \}$.
If $a = 1$, then $d ( i, j ; 1 )$ is the cyclotomic number $( i, j )_d$ of order $d$.

Assume that $d = 2$.
Then $D_0$ and $D_1$ are the sets of non-zero quadratic residues modulo $p$
and of quadratic non-residues modulo $p$, respectively.
For $a \in \mathbb{F}_p$, $d ( i, j ; a )$ is given as in Table \ref{table:diff param}.

\begin{table}[h]
\begin{center}
\caption{The difference parameter $d ( i, j ; a )$ of order $2$}   \label{table:diff param}
{\renewcommand\arraystretch{1.5}
\begin{tabular}{ccccc}
\hline\noalign{\smallskip}
& \multicolumn{2}{c}{$p \equiv 1 \mathrm{~mod~} 4$} & \multicolumn{2}{c}{$p \equiv 3 \mathrm{~mod~} 4$} \\
$( i, j )$ & $( a / p ) = 1$ & $( a / p ) = - 1$ & $( a / p ) = 1$ & $( a / p ) = - 1$ \\
\noalign{\smallskip}\hline\noalign{\smallskip}
$( 0, 0 )$ & $( p - 5 ) / 4$ & $( p - 1 ) / 4$ & $( p - 3 ) / 4$ & $( p - 3 ) / 4$ \\
$( 0, 1 )$ & $( p - 1 ) / 4$ & $( p - 1 ) / 4$ & $( p + 1 ) / 4$ & $( p - 3 ) / 4$ \\
$( 1, 0 )$ & $( p - 1 ) / 4$ & $( p - 1 ) / 4$ & $( p - 3 ) / 4$ & $( p + 1 ) / 4$ \\
$( 1, 1 )$ & $( p - 1 ) / 4$ & $( p - 5 ) / 4$ & $( p - 3 ) / 4$ & $( p - 3 ) / 4$ \\
\noalign{\smallskip}\hline
\end{tabular}}
\end{center}
\end{table}

\subsection{Hasse Derivative}

Let $\ell$ be a prime number.
For $S ( x ) = \sum_{i = 0}^{N - 1} a_i x^i \in \mathbb{F}_{\ell} [ x ]$ and $k \geq 1 \in \mathbb{Z}$,
\begin{equation*}
S^{( k )} ( x ) = \sum_{i = k}^{N - 1} \binom{i}{k} a_i x^{i - k}
\end{equation*}
is called the $k$th Hasse derivative of $S ( x )$.
Let $\xi \in \overline{\mathbb{F}_{\ell}}$ be a root of $S ( x )$,
where $\overline{\mathbb{F}_{\ell}}$ is the algebraic closure of $\mathbb{F}_{\ell}$.
Then $\mathrm{ml}_{\xi} \, S ( x ) = v$ if and only if
$S ( \xi ) = S^{( 1 )} ( \xi ) = \cdots = S^{( v - 1 )} ( \xi ) = 0$ and $S^{( v )} ( \xi ) \neq 0$
(cf. \cite[Lemma 6.51]{Lidl-Niederreiter}).

\subsection{Interleaved Sequences and Left Cyclic Shift Sequences}

For a family $\mathbb{S} = \{ S^{( i )} = ( S^{( i )}_n )_{n \geq 0} \mid 0 \leq i \leq T - 1 \}$
of periodic sequences of period $N$, the sequence $U = ( U_n )_{n \geq 0}$ defined as
\begin{equation*}
U_n = S^{( i )}_j \mbox{~if~} n = j \times T + i, \, 0 \leq i \leq T - 1, \, 0 \leq j
\end{equation*}
is called an interleaved sequence of $\mathbb{S}$.

For an periodic sequence $S = ( S_n )_{n \geq 0}$ of period $N$ and $e \in \{ 0, 1, \dots , N - 1 \}$,
the sequence $L^e ( S ) = ( L^e ( S )_n )_{n \geq 0}$ defined as
\begin{equation*}
L^e ( S )_n = S_{n + e}, \quad n \geq 0
\end{equation*}
is called a left cyclic shift sequence.

\section{Generalized NTU Sequences}
\label{section:Generalized NTU Sequences}

In this section, we survey properties of the generalized NTU sequences. For details,
see \cite{Nogami-Tada-Uehara,Nasima-Nogami-Uehara-Moleros-Zaragoza,Nogami-Uehara-Tsuchiya-Begum-Ino-Morelos-Zaragoza}.

Let $p$ be an odd prime number, $m > 1$ an integer, $\omega$ a primitive element in $\mathbb{F}_{p^m}$ and
$R = ( R_n )_{n \geq 0}$ an m-sequence defined as
$R_n = \mathrm{Tr}_{\mathbb{F}_{p^m} / \mathbb{F}_p} ( \omega^n ), \, n \geq 0$.
Put $\nu = ( p^m - 1 ) / ( p - 1 )$ and $g = \omega^{\nu} \in \mathbb{F}_p$.
Let $\ell$ be a prime divisor of $p - 1$ and $D_j, \, 0 \leq j \leq \ell - 1$ cyclotomic classes of order $\ell$.
Let $A \in \mathbb{F}_p$. We define a map $\rho_{A}^{\mathrm{NTU}} : \mathbb{F}_p \rightarrow \mathbb{F}_{\ell}$ as
\begin{equation}
\rho_{A}^{\mathrm{NTU}} ( x ) = 
\left\{
\begin{array}{ll} 0 & \mbox{if~} x + A = 0 \\ k & \mbox{if~} x + A \in D_k, \, 0 \leq k \leq \ell - 1. \end{array}
\right.
\label{equation:NTU rho}
\end{equation}
Let $T_{A}^{\mathrm{NTU}} = ( T_{A, n}^{\mathrm{NTU}} )_{n \geq 0}$ be a sequence defined as
\begin{equation}
T_{A, n}^{\mathrm{NTU}} = \rho_{A}^{\mathrm{NTU}} ( R_n ), \quad n \geq 0,
\label{equation:NTU}
\end{equation}
and $t_{A}^{\mathrm{NTU}} = ( t_{A, n}^{\mathrm{NTU}} )_{n \geq 0}$ a sequence defined as
\begin{equation}
t_{A, n}^{\mathrm{NTU}} = \rho_{A}^{\mathrm{NTU}} ( g^n ), \quad n \geq 0.
\label{equation:Sid}
\end{equation}
$T_{A}^{\mathrm{NTU}}$ is called an $\ell$-ary generalized NTU sequence.

If $A = 0$, then $T_{0}^{\mathrm{NTU}}$ is an $\ell$-ary NTU sequence
\cite{Nogami-Tada-Uehara,Nasima-Nogami-Uehara-Moleros-Zaragoza}.
$T_{0}^{\mathrm{NTU}}$ is a periodic sequence of period $N_0 := \ell ( p^m - 1 ) / ( p - 1 )$.
The autocorrelation distribution of $T_{0}^{\mathrm{NTU}}$ have been explicitly obtained
in \cite[Property 6]{Nogami-Tada-Uehara} and \cite[Property 5]{Nasima-Nogami-Uehara-Moleros-Zaragoza}.
The linear complexity of $T_{0}^{\mathrm{NTU}}$ is $L ( T_{0}^{\mathrm{NTU}} ) = 2 ( p^m - 1 ) / ( p - 1 )$.
In particular, if $\ell = 2$, then $L ( T_{0}^{\mathrm{NTU}} )$ attains the maximal value $N_0$.

Assume that $A \neq 0$. Then $T_{A}^{\mathrm{NTU}}$ is a periodic sequence of period $N:= p^m - 1$.
The autocorrelation distribution of $T_{A}^{\mathrm{NTU}}$ have been explicitly obtained
in \cite[3.3]{Nogami-Uehara-Tsuchiya-Begum-Ino-Morelos-Zaragoza}.
In particular, if $\ell = 2$, then the autocorrelation distribution of $T_{A}^{\mathrm{NTU}}$ is
\begin{equation*}
R_{T_{A}^{\mathrm{NTU}}} ( \tau ) =
\left\{ \begin{array}{ll}
N & \mbox{if~} \tau = 0 \\ N_1^{( j )} & \mbox{if~} \tau = j \nu, \, 1 \leq j \leq p - 2 \\ N_2 & \mbox{otherwise}
\end{array} \right.
\end{equation*}
for $\tau \in \{ 0, 1, \dots , N - 1 \}$.
Here $N_1^{( j )}$ and $N_2$ are defined as
\begin{eqnarray}
\label{eqnarray:def of N_1}
N_1^{( j )} & = & p^{m - 1} \left\{ ( - 1 )^{\rho_{A}^{\mathrm{NTU}} ( - g^j A )}
+ ( - 1 )^{\rho_{A}^{\mathrm{NTU}} ( - g^{- j} A )} + ( - 1 )^{( j + 1 )} \right\} - 1, \\
N_2 & = & p^{m - 2} - 1,
\label{eqnarray:def of N_2}
\end{eqnarray}
respectively.

\begin{remark}
If $A = 1$, then $t_{1}^{\mathrm{NTU}}$ is an $\ell$-ary Sidel'nikov sequence \cite{Sidel'nikov,Lempel-Cohn-Eastman}.
\label{remark:Sidel'nikov seq}
\end{remark}

\begin{lemma}
Let $A_1, A_2 \in \mathbb{F}_p \setminus \{ 0 \}$.
If $A_1$ and $A_2$ belong to the same cyclotomic class,
then there exists $i \in \{ 0, 1, \dots ,p^m - 2 \}$ such that
$T_{A_2, n}^{\mathrm{NTU}} = T_{A_1, n + i}^{\mathrm{NTU}}, \, n \geq 0$.
\label{lemma:equivalence}
\end{lemma}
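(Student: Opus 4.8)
The plan is to exhibit the shift $i$ explicitly as a multiple of $\nu$. First I would record the elementary fact that $g = \omega^{\nu}$ has multiplicative order $(p^m - 1)/\nu = p - 1$ in $\mathbb{F}_p^{\times}$, so $g$ is a primitive element of $\mathbb{F}_p$; consequently the cyclotomic classes $D_0, \dots, D_{\ell - 1}$ of order $\ell$ that define $\rho^{\mathrm{NTU}}$ are exactly the cosets $g^{j} D_0$ in terms of this $g$. Since $A_1$ and $A_2$ lie in the same class $D_{j_0}$, their quotient $A_1 A_2^{-1}$ lies in the subgroup $D_0$, hence $A_1 A_2^{-1} = g^{s}$ for some integer $s$ with $\ell \mid s$; fix such an $s$.

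Next I would set $i = (\nu s) \bmod (p^m - 1)$ and compute $R_{n + i}$. Because $\omega^{\nu s} = g^{s} \in \mathbb{F}_p$ and $R$ has period $p^m - 1$, the $\mathbb{F}_p$-linearity of $\mathrm{Tr}_{\mathbb{F}_{p^m}/\mathbb{F}_p}$ gives $R_{n + i} = \mathrm{Tr}_{\mathbb{F}_{p^m}/\mathbb{F}_p}(\omega^{\nu s}\,\omega^{n}) = g^{s} R_n$ for all $n \geq 0$. Then I would compare $T_{A_1, n + i}^{\mathrm{NTU}} = \rho_{A_1}^{\mathrm{NTU}}(g^{s} R_n)$ with $T_{A_2, n}^{\mathrm{NTU}} = \rho_{A_2}^{\mathrm{NTU}}(R_n)$ by factoring the argument of $\rho_{A_1}^{\mathrm{NTU}}$: using $g^{-s} A_1 = A_2$ we get $g^{s} R_n + A_1 = g^{s}(R_n + g^{-s} A_1) = g^{s}(R_n + A_2)$. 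This quantity vanishes precisely when $R_n + A_2 = 0$, which handles the value $0$; and when $R_n + A_2 \in D_k$, multiplication by $g^{s}$ maps it into $g^{s} D_k = D_{(k + s)\bmod \ell} = D_k$ because $\ell \mid s$. Hence $\rho_{A_1}^{\mathrm{NTU}}(g^{s} R_n) = \rho_{A_2}^{\mathrm{NTU}}(R_n)$ for every $n$, i.e.\ $T_{A_2, n}^{\mathrm{NTU}} = T_{A_1, n + i}^{\mathrm{NTU}}$, as required.

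The argument is essentially a direct computation; the only points needing care — and the closest thing to an obstacle — are verifying at the outset that $g$ is genuinely primitive in $\mathbb{F}_p$, so that membership of $A_1, A_2$ in a common cyclotomic class really does force $A_1 A_2^{-1} = g^{s}$ with $\ell \mid s$, and then keeping the cyclotomic-class index bookkeeping straight under multiplication by $g^{s}$. Choosing the shift among the multiples of $\nu$ is what makes $\omega^{i}$ land in $\mathbb{F}_p$ and is the one idea that drives the whole proof.
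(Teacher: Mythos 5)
Your proof is correct and is essentially the argument the paper has in mind: the paper disposes of this lemma in one line by citing the shift-and-add (constant-multiple) property of m-sequences, and your computation --- shifting by $i = \nu s$ so that $R_{n+i} = g^{s}R_n$ with $g^{s} = A_1A_2^{-1} \in D_0$, which fixes every cyclotomic class --- is exactly the content of that citation, written out in full. No gaps.
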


\begin{proof}
The statement follows from the shift-and-add property of m-sequences.
\end{proof}

\section{Linear Complexity of Generalized NTU Sequences}
\label{section:Linear Complexity of Generalized NTU Sequences}

In this section, we consider linear complexity of generalized NTU sequences.

Let $p, m, \omega, R = ( R_n )_{n \geq 0}, \nu, g, \ell$
and $D_j, \, 0 \leq j \leq \ell - 1$ be as in Sect. \ref{section:Generalized NTU Sequences}.
Let $A \in \mathbb{F}_p \setminus \{ 0 \}$.
Let $\rho_{A}^{\mathrm{NTU}} : \mathbb{F}_p \rightarrow \mathbb{F}_{\ell}$ be a map defined as (\ref{equation:NTU rho}),
$T_{A}^{\mathrm{NTU}} = ( T_{A, n}^{\mathrm{NTU}} )_{n \geq 0}$ a sequence defined as (\ref{equation:NTU}) and
$t_{A}^{\mathrm{NTU}} = ( t_{A, n}^{\mathrm{NTU}} )_{n \geq 0}$ a sequence defined as (\ref{equation:Sid}).

\subsection{The Case of $A \in D_0$}

Assume that $A \in D_0$.
Since $\rho_{A}^{\mathrm{NTU}} ( 0 ) = 0$, we can apply (\ref{equation:Chan-Games}) to $L ( T_{A}^{\mathrm{NTU}} )$.
By Lemma \ref{lemma:equivalence}, we have
\begin{equation*}
L ( T_{A}^{\mathrm{NTU}} ) = L ( T_{1}^{\mathrm{NTU}} ) = \nu L ( t_{1}^{\mathrm{NTU}} ).
\end{equation*}
Since $t_{1}^{\mathrm{NTU}}$ is a Sidel'nikov sequence
(see \cite{Sidel'nikov,Lempel-Cohn-Eastman} and Remark \ref{remark:Sidel'nikov seq}),
$L ( T_{A}^{\mathrm{NTU}} )$ is obtained from some known results
\cite{Helleseth-Yang,Kyureghyan-Pott,Meidl-Winterhof,Alaca-Millar,Zhang-Yang,Brandstatter-Meidl}
on $L ( t_{1}^{\mathrm{NTU}} )$.

\subsection{The Case of $A \in D_k, \, k \neq 0$}

Assume that $A \in D_k, \, k \neq 0$.
Since $\rho_{A}^{\mathrm{NTU}} ( 0 ) \neq 0$,
we can not apply (\ref{equation:Chan-Games}) to $L ( T_{A}^{\mathrm{NTU}} )$.
In order to apply (\ref{equation:Chan-Games}), we introduce a sequence whose any term is added by a complement of $k$.

We define a map $\overline{\rho_{A}^{\mathrm{NTU}}} : \mathbb{F}_p \rightarrow \mathbb{F}_{\ell}$ as
\begin{equation}
\overline{\rho_{A}^{\mathrm{NTU}}} ( x ) = \rho_{A}^{\mathrm{NTU}} ( x ) - k, \quad x \in \mathbb{F}_p.
\label{equation:Comp NTU rho}
\end{equation}
Let $\overline{T_{A}^{\mathrm{NTU}}} = ( \overline{T_{A, n}^{\mathrm{NTU}}} )_{n \geq 0}$ be a sequence defined as
\begin{equation}
\overline{T_{A, n}^{\mathrm{NTU}}} = \overline{\rho_{A}^{\mathrm{NTU}}} ( R_n ), \quad n \geq 0,
\label{equation:Comp NTU}
\end{equation}
and $\overline{t_{A}^{\mathrm{NTU}}} = ( \overline{t_{A, n}^{\mathrm{NTU}}} )_{n \geq 0}$ a sequence defined as
\begin{equation}
\overline{t_{A, n}^{\mathrm{NTU}}}= \overline{\rho_{A}^{\mathrm{NTU}}} ( g^n ), \quad n \geq 0.
\label{equation:Comp Sid}
\end{equation}
Since $\overline{\rho_{A}^{\mathrm{NTU}}} ( 0 ) = \rho_{A}^{\mathrm{NTU}} ( 0 ) - k = 0$,
we can apply (\ref{equation:Chan-Games}) to $L ( \overline{T_{A, n}^{\mathrm{NTU}}} )$.

\begin{lemma}[\cite{Lidl-Niederreiter} Theorem 8.62]
Let $T = ( T_n )_{n \geq 0}$ be a periodic sequence of period $N$ over $\mathbb{F}_{\ell}$.
For $a \in \mathbb{F}_{\ell} \setminus \{ 0 \}$,
let $\overline{T} = ( \overline{T}_n )_{n \geq 0}$ be a sequence defined by $\overline{T}_n = T_n + a, \, n \geq 0$.
\begin{enumerate}
\item If $\mathrm{ml}_1 \, ( x^N - 1 ) \leq \mathrm{ml}_1 \, ( T ( x ) )$, then $L ( \overline{T} ) = L ( T ) + 1$.
\item If $\mathrm{ml}_1 \, ( x^N - 1 ) = \mathrm{ml}_1 \, ( T ( x ) ) + 1$, then $L ( \overline{T} ) = L ( T ) - 1$.
\item If $\mathrm{ml}_1 \, ( x^N - 1 ) > \mathrm{ml}_1 \, ( T ( x ) ) + 1$, then $L ( \overline{T} ) = L ( T )$.
\end{enumerate}
\label{lemma:mult vs LC}
\end{lemma}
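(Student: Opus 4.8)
The plan is to read everything off the identity $L ( U ) = N - \deg \gcd ( x^N - 1, U ( x ) )$ and to localize the comparison of $T$ and $\overline{T}$ at the factor $x - 1$ of $x^N - 1$. Since $\overline{T}_n = T_n + a$ for every $n$, we have
\begin{equation*}
\overline{T} ( x ) = T ( x ) + a ( 1 + x + \cdots + x^{N - 1} ) = T ( x ) + a \cdot \frac{x^N - 1}{x - 1} ,
\end{equation*}
so the whole question becomes how $\deg \gcd ( x^N - 1, T ( x ) )$ and $\deg \gcd ( x^N - 1, \overline{T} ( x ) )$ differ.

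First I would factor $x^N - 1 = \prod_j f_j ( x )^{e_j}$ into distinct monic irreducibles over $\mathbb{F}_{\ell}$, taking $f_1 ( x ) = x - 1$, so that $e_1 = \mathrm{ml}_1 \, ( x^N - 1 )$. For each $j \neq 1$ the power $f_j ( x )^{e_j}$ divides $( x^N - 1 ) / ( x - 1 )$, hence $T ( x ) \equiv \overline{T} ( x ) \pmod{f_j ( x )^{e_j}}$; a routine argument with the ultrametric inequality for the $f_j$-adic valuation then shows that $f_j$ enters $\gcd ( x^N - 1, T ( x ) )$ and $\gcd ( x^N - 1, \overline{T} ( x ) )$ with the same exponent. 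Thus the entire difference is concentrated at $x - 1$, and
\begin{equation*}
L ( \overline{T} ) - L ( T ) = \min ( e_1, \mathrm{ml}_1 \, ( T ( x ) ) ) - \min ( e_1, \mathrm{ml}_1 \, ( \overline{T} ( x ) ) ) .
\end{equation*}
It remains to locate $\mathrm{ml}_1 \, ( \overline{T} ( x ) )$. For this I would record two facts about $g ( x ) := ( x^N - 1 ) / ( x - 1 )$: writing $x^N - 1 = ( x - 1 )^{e_1} h ( x )$ with $h ( 1 ) \neq 0$ shows that $1$ is a zero of $g ( x )$ of multiplicity exactly $e_1 - 1$, and a short Hasse-derivative computation gives that its first nonvanishing Hasse derivative at $1$ is $g^{( e_1 - 1 )} ( 1 ) = \binom{N}{e_1}$, which is nonzero in $\mathbb{F}_{\ell}$ precisely because $\mathrm{ml}_1 \, ( x^N - 1 ) = e_1$.

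Now put $v := \mathrm{ml}_1 \, ( T ( x ) )$; the three parts of the statement correspond to the three positions of $v$ relative to $e_1 - 1$. If $v \geq e_1$ (part~1), then in $\overline{T} ( x ) = T ( x ) + a g ( x )$ the summand $a g ( x )$ has strictly smaller multiplicity at $1$ than $T ( x )$, so $\mathrm{ml}_1 \, ( \overline{T} ( x ) ) = e_1 - 1$ and the displayed difference equals $e_1 - ( e_1 - 1 ) = 1$. If $v \leq e_1 - 2$ (part~3), it is now $T ( x )$ that has the strictly smaller multiplicity, so $\mathrm{ml}_1 \, ( \overline{T} ( x ) ) = v$ and the difference is $0$. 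The case $v = e_1 - 1$ (part~2) is the delicate one: here $T ( x )$ and $a g ( x )$ each vanish at $1$ to order exactly $e_1 - 1$, so the first nonvanishing Hasse derivatives at $1$ of $T ( x )$ and of $a g ( x )$ --- two nonzero elements of $\mathbb{F}_{\ell}$ --- are added when forming $\overline{T} ( x )$, and the assertion amounts to saying this sum is $0$, so that $\mathrm{ml}_1 \, ( \overline{T} ( x ) ) \geq e_1$ and the difference becomes $( e_1 - 1 ) - e_1 = - 1$. Establishing this cancellation is the main obstacle; the tool I would reach for is the structure theory of linear recurring sequences: since the hypothesis $v = e_1 - 1$ says precisely that $x - 1$ divides $m_T ( x ) = ( x^N - 1 ) / \gcd ( x^N - 1, T ( x ) )$ but $( x - 1 )^2$ does not, the sequence $T$ splits as a nonzero constant sequence plus a sequence whose minimal polynomial is prime to $x - 1$, and one then tracks how adding the constant $a$ acts on the constant summand, obtaining $m_{\overline{T}} ( x ) = m_T ( x ) / ( x - 1 )$ and hence $L ( \overline{T} ) = L ( T ) - 1$. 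When $\ell = 2$ this cancellation is automatic, both of those Hasse derivatives being the unique nonzero scalar $1$. Combining the three cases yields the stated formulas.
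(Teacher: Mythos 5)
The paper does not prove this lemma at all --- it is quoted directly from Lidl--Niederreiter, Theorem 8.62 --- so there is no in-paper argument to compare against, and I assess your proposal on its own merits. Your reduction is sound: the identity $\overline{T} ( x ) = T ( x ) + a \, ( x^N - 1 ) / ( x - 1 )$, the observation that for every irreducible factor $f_j \neq x - 1$ the two polynomials agree modulo $f_j^{e_j}$ so that only the factor $x - 1$ can contribute, and the resulting formula $L ( \overline{T} ) - L ( T ) = \min ( e_1, \mathrm{ml}_1 \, ( T ( x ) ) ) - \min ( e_1, \mathrm{ml}_1 \, ( \overline{T} ( x ) ) )$ are all correct, and they dispose of parts 1 and 3 completely; the computation $g^{( e_1 - 1 )} ( 1 ) = \binom{N}{e_1} \neq 0$ is also right.

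The gap is exactly where you flag it, in part 2, and it cannot be closed the way you suggest. Writing $T = c + T'$ with $c$ a nonzero constant sequence and $( x - 1 ) \nmid m_{T'}$, you get $\overline{T} = ( c + a ) + T'$, and ``tracking how adding $a$ acts on the constant summand'' yields $m_{\overline{T}} ( x ) = m_T ( x ) / ( x - 1 )$ only when $c + a = 0$; when $c + a \neq 0$ it yields $m_{\overline{T}} = m_T$ and hence $L ( \overline{T} ) = L ( T )$. Equivalently, since $T^{( e_1 - 1 )} ( 1 ) = c \binom{N}{e_1}$ and $a \, g^{( e_1 - 1 )} ( 1 ) = a \binom{N}{e_1}$, the cancellation you need holds for exactly one nonzero value of $a$, not for all of them. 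Indeed part 2 as stated is false for $\ell > 2$: take $\ell = 3$, $N = 3$, $T = ( 1, 1, 1, \dots )$ and $a = 1$; then $\mathrm{ml}_1 \, ( x^3 - 1 ) = 3 = \mathrm{ml}_1 \, ( T ( x ) ) + 1$, but $\overline{T} = ( 2, 2, 2, \dots )$ and $L ( \overline{T} ) = L ( T ) = 1$, not $0$. Your closing remark is the correct resolution: for $\ell = 2$ both $c$ and $a$ are forced to equal $1$, the cancellation is automatic, and part 2 holds --- and $\ell = 2$ is in fact the only setting in which the paper ever invokes part 2. But as written your argument asserts the conclusion for general $\ell$ and general $a$, which does not follow from your decomposition and is not true.
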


\begin{proposition}
Assume that $A \in D_k, \, k \neq 0$. If $\ell = 2$ and $p \equiv 1 \mathrm{~mod~} 4$ or $\ell \geq 3$, then
\begin{equation*}
L ( T_{A}^{\mathrm{NTU}} ) = \nu L ( t_{A}^{\mathrm{NTU}} ).
\end{equation*}
\end{proposition}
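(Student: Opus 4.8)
The plan is to transfer the statement to the complement sequences $\overline{T_{A}^{\mathrm{NTU}}}$ and $\overline{t_{A}^{\mathrm{NTU}}}$, for which the Chan--Games formula applies, and then to control the effect of adding the constant $k$ back in via Lemma~\ref{lemma:mult vs LC}. Since $\overline{\rho_{A}^{\mathrm{NTU}}}(0)=0$, formula (\ref{equation:Chan-Games}) gives $L(\overline{T_{A}^{\mathrm{NTU}}})=\nu\,L(\overline{t_{A}^{\mathrm{NTU}}})$; and since $\overline{\rho_{A}^{\mathrm{NTU}}}=\rho_{A}^{\mathrm{NTU}}-k$ by (\ref{equation:Comp NTU rho}), one has $T_{A,n}^{\mathrm{NTU}}=\overline{T_{A,n}^{\mathrm{NTU}}}+k$ and $t_{A,n}^{\mathrm{NTU}}=\overline{t_{A,n}^{\mathrm{NTU}}}+k$ for all $n$, with $k\in\mathbb{F}_{\ell}\setminus\{0\}$, so Lemma~\ref{lemma:mult vs LC} applies to both pairs. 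Writing $L(T_{A}^{\mathrm{NTU}})=L(\overline{T_{A}^{\mathrm{NTU}}})+\delta$ and $L(t_{A}^{\mathrm{NTU}})=L(\overline{t_{A}^{\mathrm{NTU}}})+\delta'$ with $\delta,\delta'\in\{-1,0,1\}$, the asserted identity is equivalent to $\delta=\nu\delta'$, which---as $\nu=1+p+\dots+p^{m-1}\ge p+1\ge4$---holds if and only if $\delta=\delta'=0$. It therefore suffices to show that adding $k$ changes neither linear complexity, i.e.\ that both pairs fall under the third case of Lemma~\ref{lemma:mult vs LC}.

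For that I would first record that, writing $N=\ell^{a}N_{0}$ with $\ell\nmid N_{0}$, one has $x^{N}-1=(x^{N_{0}}-1)^{\ell^{a}}$ over $\mathbb{F}_{\ell}$ with $x^{N_{0}}-1$ separable, so $\mathrm{ml}_{1}\,(x^{N}-1)=\ell^{a}$, the exact power of $\ell$ dividing $N$. Since $\ell\mid p-1\mid p^{m}-1$, both $\mathrm{ml}_{1}\,(x^{p-1}-1)$ and $\mathrm{ml}_{1}\,(x^{p^{m}-1}-1)$ are at least $\ell\ge2$. Hence the third case of Lemma~\ref{lemma:mult vs LC} applies to both pairs as soon as $\mathrm{ml}_{1}\,(\overline{t_{A}^{\mathrm{NTU}}}(x))=\mathrm{ml}_{1}\,(\overline{T_{A}^{\mathrm{NTU}}}(x))=0$, i.e.\ as soon as neither $\overline{t_{A}^{\mathrm{NTU}}}(x)$ nor $\overline{T_{A}^{\mathrm{NTU}}}(x)$ vanishes at $x=1$; then $L(t_{A}^{\mathrm{NTU}})=L(\overline{t_{A}^{\mathrm{NTU}}})$ and $L(T_{A}^{\mathrm{NTU}})=L(\overline{T_{A}^{\mathrm{NTU}}})$, and combining with the Chan--Games identity yields $L(T_{A}^{\mathrm{NTU}})=\nu\,L(t_{A}^{\mathrm{NTU}})$.

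The core of the proof is thus the nonvanishing, which comes down to a single cyclotomic count. As $g$ has order $p-1$,
\begin{equation*}
\overline{t_{A}^{\mathrm{NTU}}}(1)=\sum_{c\in\mathbb{F}_{p}^{*}}\overline{\rho_{A}^{\mathrm{NTU}}}(c)=\sum_{c\in\mathbb{F}_{p}^{*}}\rho_{A}^{\mathrm{NTU}}(c)-(p-1)k ,
\end{equation*}
and for each $j$ the number of $c\in\mathbb{F}_{p}^{*}$ with $c+A\in D_{j}$ equals $|D_{j}|-[\,j=k\,]=(p-1)/\ell-[\,j=k\,]$ (using $|D_{j}|=(p-1)/\ell$ and $A\in D_{k}$), so $\sum_{c\in\mathbb{F}_{p}^{*}}\rho_{A}^{\mathrm{NTU}}(c)=\frac{p-1}{\ell}\sum_{j=0}^{\ell-1}j-k=\frac{(p-1)(\ell-1)}{2}-k$ in $\mathbb{Z}$. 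Reducing modulo $\ell$ and using $(p-1)k\equiv0$ gives $\overline{t_{A}^{\mathrm{NTU}}}(1)\equiv\frac{(p-1)(\ell-1)}{2}-k\pmod{\ell}$: for $\ell\ge3$ this is $-k\not\equiv0$ since $1\le k\le\ell-1$, and for $\ell=2$ it is $\frac{p-1}{2}-1$, which is odd precisely when $p\equiv1\pmod{4}$---exactly the stated hypothesis. For the long sequence, $R_{n}=\mathrm{Tr}_{\mathbb{F}_{p^{m}}/\mathbb{F}_{p}}(\omega^{n})$ takes the value $0$ exactly $p^{m-1}-1$ times and every nonzero value exactly $p^{m-1}$ times over one period (the trace being $\mathbb{F}_{p}$-linear and onto), so, using $\overline{\rho_{A}^{\mathrm{NTU}}}(0)=0$,
\begin{equation*}
\overline{T_{A}^{\mathrm{NTU}}}(1)=\sum_{n=0}^{p^{m}-2}\overline{\rho_{A}^{\mathrm{NTU}}}(R_{n})=p^{m-1}\sum_{c\in\mathbb{F}_{p}^{*}}\overline{\rho_{A}^{\mathrm{NTU}}}(c)=p^{m-1}\,\overline{t_{A}^{\mathrm{NTU}}}(1)\equiv\overline{t_{A}^{\mathrm{NTU}}}(1)\not\equiv0\pmod{\ell},
\end{equation*}
using $p\equiv1\pmod{\ell}$. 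This settles both nonvanishings, and hence the proposition.

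The step I expect to be the main obstacle is the calibration in the first paragraph: because Lemma~\ref{lemma:mult vs LC} permits a change by $\pm1$ on each side while the factor $\nu$ leaves no room, one is forced to establish the sharp facts $\overline{t_{A}^{\mathrm{NTU}}}(1)\ne0$ and $\overline{T_{A}^{\mathrm{NTU}}}(1)\ne0$ rather than some weaker multiplicity estimate, and the hypothesis on $(\ell,p)$ is dictated exactly by when $\overline{t_{A}^{\mathrm{NTU}}}(1)$ fails to vanish. Once this is in place, the cyclotomic-class count and the trace-fibre count are routine.
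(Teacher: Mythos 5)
Your proof is correct and follows essentially the same route as the paper: reduce to the complement sequences via the Chan--Games formula, then show that adding the constant $k$ leaves both linear complexities unchanged by verifying that the generating polynomials do not vanish at $1$ (so Lemma~\ref{lemma:mult vs LC}, case 3, applies), using exactly the same cyclotomic-class and trace-fibre counts; your evaluation of $\overline{t_{A}^{\mathrm{NTU}}}(1)$ and $\overline{T_{A}^{\mathrm{NTU}}}(1)$ differs from the paper's evaluation of $t_{A}^{\mathrm{NTU}}(1)$ and $T_{A}^{\mathrm{NTU}}(1)$ only by a multiple of $\ell$.
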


\begin{proof}
Since $p \equiv 1 \mathrm{~mod~} \ell$, $p^m - 1 \equiv p - 1 \equiv 0 \mathrm{~mod~} \ell$.
Hence $x^{p^m - 1} - 1 = ( x^{( p^m - 1 ) / \ell} - 1 )^{\ell}$ and
$x^{p - 1} - 1 = ( x^{( p - 1 ) / \ell} - 1 )^{\ell}$.
Therefore, $\mathrm{ml}_1 \, ( x^{p^m - 1} - 1 ) \geq \ell \geq 2$ and $\mathrm{ml}_1 \, ( x^{p - 1} - 1 ) \geq \ell \geq 2$.
On the other hand,
\begin{eqnarray*}
T_{A}^{\mathrm{NTU}} \, ( 1 ) & = &
\sum_{i = 0}^{p^m - 2} T_{A, i}^{\mathrm{NTU}} = \frac{( \ell - 1 ) ( p - 1 ) p^{m - 1}}{2} - k \\ & = &
\left\{ \begin{array}{ll}
0 & \mbox{if~} \ell = 2 \mbox{~and~} p \equiv 3 \mathrm{~mod~} 4 \\
\ell - k & \mbox{otherwise},
\end{array} \right.
\end{eqnarray*}
and
\begin{eqnarray*}
t_{A}^{\mathrm{NTU}} \, ( 1 ) & = &
\sum_{i = 0}^{p - 2} t_{A, i}^{\mathrm{NTU}} = \frac{( \ell - 1 ) ( p - 1 )}{2} - k \\ & = &
\left\{ \begin{array}{ll}
0 & \mbox{if~} \ell = 2 \mbox{~and~} p \equiv 3 \mathrm{~mod~} 4 \\
\ell - k & \mbox{otherwise}.
\end{array} \right.
\end{eqnarray*}
Hence the statement follows from Lemma \ref{lemma:Chan-Games} and Lemma \ref{lemma:mult vs LC}.
\end{proof}

\begin{example}
Let $p = 29, m = 3$ and $\ell = 2$.
%Let $\omega$ be a root of the primitive polynomial $x^3 + x^2 + 15 x + 8 \in \mathbb{F}_p [ x ]$.
Assume that $( A / p ) = - 1$. Then
\begin{eqnarray*}
&& L ( T_{A}^{\mathrm{NTU}} ) = L ( \overline{T_{A}^{\mathrm{NTU}}} ) = 24388, \\
&& L ( t_{A}^{\mathrm{NTU}} ) = L ( \overline{t_{A}^{\mathrm{NTU}}} ) = 28, \\
&& \nu = \frac{p^m - 1}{p - 1} = 871.
\end{eqnarray*}
\end{example}

If $\ell = 2, ( A / p ) = - 1$ and $p \equiv 3 \mathrm{~mod~} 4$,
then $\mathrm{ml}_1 \, ( T_{A}^{\mathrm{NTU}} \, ( x ) ) \geq 1$ and
$\mathrm{ml}_1 \, ( t_{A}^{\mathrm{NTU}} \, ( x ) ) \geq 1$.
Therefore, we evaluate the first Hasse derivative of $T_{A}^{\mathrm{NTU}} \, ( x )$ at $1$.

\begin{lemma}
Assume that $\ell = 2, ( A / p ) = - 1, p \equiv 3 \mathrm{~mod~} 4$ and $m \equiv 1 \mathrm{~mod~} 2$. Then
\begin{equation*}
{T_{A}^{\mathrm{NTU}}}^{( 1 )} ( 1 ) =
\left\{ \begin{array}{ll}
0 & \mbox{if~} p \equiv 3 \mathrm{~mod~} 8 \\
1 & \mbox{if~} p \equiv 7 \mathrm{~mod~} 8.
\end{array} \right.
\end{equation*}
\label{lemma:Hasse derivative}
\end{lemma}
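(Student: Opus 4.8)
The plan is to turn ${T_A^{\mathrm{NTU}}}^{(1)}(1)$ into a cardinality modulo $2$ and to evaluate it with Gauss sums. Since $\binom{i}{1}=i$ and we work over $\mathbb{F}_2$, the quantity ${T_A^{\mathrm{NTU}}}^{(1)}(1)=\sum_{i=1}^{p^m-2}i\,T^{\mathrm{NTU}}_{A,i}$ equals, modulo $2$, the number of \emph{odd} indices $i\in\{1,\dots,p^m-2\}$ with $T^{\mathrm{NTU}}_{A,i}=1$. As $i$ runs through these odd integers, $\omega^i$ runs exactly once through the set of non-squares of $\mathbb{F}_{p^m}^{\ast}$, because $\omega$ is primitive. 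For $\ell=2$ one has $T^{\mathrm{NTU}}_{A,n}=1$ precisely when $R_n+A=\mathrm{Tr}_{\mathbb{F}_{p^m}/\mathbb{F}_p}(\omega^n)+A$ lies in $D_1$, i.e.\ is a quadratic non-residue of $\mathbb{F}_p$; note also that $(A/p)=-1$ means $A\in D_1$. Writing $\eta$ and $\chi$ for the quadratic characters of $\mathbb{F}_{p^m}$ and $\mathbb{F}_p$ (with $\eta(0)=\chi(0)=0$) and $\mathrm{Tr}$ for the trace, this gives
\[
{T_A^{\mathrm{NTU}}}^{(1)}(1)\equiv\#\bigl\{x\in\mathbb{F}_{p^m}^{\ast}:\eta(x)=-1,\ \chi(\mathrm{Tr}(x)+A)=-1\bigr\}\pmod 2 .
\]

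Next I would expand this count with indicator identities. Using $\mathbf 1[\eta(x)=-1]=\tfrac12(1-\eta(x))$ for $x\ne 0$, and noting that the $x$ with $\mathrm{Tr}(x)+A=0$ never contribute, the count equals
\[
\tfrac14\Bigl(\textstyle\sum_{x\ne 0}(1-\eta(x))(1-\chi(\mathrm{Tr}(x)+A))-\sum_{\mathrm{Tr}(x)=-A}(1-\eta(x))\Bigr).
\]
Expanding and using $\sum_x\eta(x)=0$, $\sum_{c\in\mathbb{F}_p}\chi(c+A)=0$, $\chi(A)=-1$, and $\#\{x:\mathrm{Tr}(x)=c\}=p^{m-1}$, everything collapses to $\tfrac14(p^m-2-p^{m-1})$ plus $\tfrac14$ times the two sums $W:=\sum_{x\in\mathbb{F}_{p^m}}\eta(x)\chi(\mathrm{Tr}(x)+A)$ and $W':=\sum_{\mathrm{Tr}(x)=-A}\eta(x)$. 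These I would evaluate by fixing a nontrivial additive character $\psi_p$ of $\mathbb{F}_p$, writing $\chi(u)=G^{-1}\sum_{t\in\mathbb{F}_p}\chi(t)\psi_p(tu)$ with the Gauss sum $G=\sum_t\chi(t)\psi_p(t)$, and using $\psi_p(t\,\mathrm{Tr}(x))=\psi(tx)$ for $\psi:=\psi_p\circ\mathrm{Tr}$. The inner sum over $x$ becomes $\eta(t)$ times the Gauss sum $G_m$ of $\mathbb{F}_{p^m}$ attached to $\eta,\psi$, and here the hypothesis $m\equiv 1\pmod 2$ enters: it forces $\eta(t)=\chi(t)^m=\chi(t)$ for $t\in\mathbb{F}_p^{\ast}$, so $\chi(t)\eta(t)=1$ and the remaining sum over $t$ degenerates to a geometric sum equal to $-1$. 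This yields $W=-G_m/G$ and $W'=\chi(A)G_mG/p=-G_mG/p$.

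To finish I would insert the closed forms. The Davenport--Hasse lifting relation gives $G_m=(-1)^{m-1}G^m=G^m$ since $m$ is odd, and $G^2=\chi(-1)p=-p$ since $p\equiv 3\pmod 4$; hence $W=-G^{m-1}=-(-p)^{(m-1)/2}$ and $W'=-G^{m+1}/p=-(-p)^{(m+1)/2}/p$. A short manipulation using $(-1)^{(m+1)/2}=-(-1)^{(m-1)/2}$ shows $W+W'=0$, so the count equals $\tfrac14(p^{m-1}(p-1)-2)=\tfrac12\bigl(p^{m-1}\tfrac{p-1}{2}-1\bigr)$. Finally, since $m$ is odd we have $p^{m-1}\equiv 1\pmod 8$, so this quantity is congruent modulo $2$ to $\tfrac12\bigl(\tfrac{p-1}{2}-1\bigr)$; and $\tfrac{p-1}{2}\equiv 1\pmod 4$ when $p\equiv 3\pmod 8$ while $\tfrac{p-1}{2}\equiv 3\pmod 4$ when $p\equiv 7\pmod 8$, giving the values $0$ and $1$ as claimed.

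The main obstacle is the evaluation of $W$: getting the normalization of the Gauss sums right, justifying $\psi_p(t\,\mathrm{Tr}(x))=\psi(tx)$, and — most importantly — using $m$ odd exactly where it is needed to identify the restriction of $\eta$ to $\mathbb{F}_p^{\ast}$ with $\chi$. Everything else (the indicator bookkeeping, the Davenport--Hasse step, the final $2$-adic count) is routine, and the cancellation $W+W'=0$ is the small miracle that produces such a clean answer.
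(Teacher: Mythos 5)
Your proof is correct, but it takes a genuinely different route from the paper's. The paper argues entirely combinatorially: it splits the index set into the $\nu$ arithmetic progressions $\{i\nu+j\}_{0\le i\le p-2}$, uses the m-sequence structure $R_{i\nu+j}=g^iR_j$ to identify each nonzero ``column'' with a coset $R_jD_0\cup R_jD_1$, and then reads off the number of odd-index hits from the difference parameters $d(k,1;A)=(p-3)/4$ of Table \ref{table:diff param}, together with the count $(p^{m-1}-1)/(p-1)$ of columns with $R_j=0$; this yields the exact integer $\tfrac14\bigl(p^{m-1}(p-1)-2\bigr)$ directly. You instead recast the same quantity as the complete character sum count $\#\{x:\eta(x)=-1,\ \chi(\mathrm{Tr}(x)+A)=-1\}$ and evaluate it via Gauss sums, Davenport--Hasse, and the identity $\eta|_{\mathbb{F}_p^{\ast}}=\chi$ for $m$ odd; the cancellation $W+W'=0$ (which indeed requires both $m$ odd and $p\equiv 3\bmod 4$, the latter through $G^2=-p$) reproduces exactly the same integer $\tfrac14\bigl(p^{m-1}(p-1)-2\bigr)$, and your final reduction modulo $2$ matches the paper's. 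I checked the indicator bookkeeping (the correction term for $\mathrm{Tr}(x)=-A$ is handled properly, and $\eta(0)=\chi(0)=0$ causes no trouble) and the evaluations $W=-G_m/G$ and $W'=-G_mG/p$; all are right. The trade-off is clear: the paper's argument is elementary and exploits the geometric-sequence/cyclotomic structure that the rest of the paper is built on, whereas your character-sum argument needs heavier machinery (Davenport--Hasse, the lift $\eta=\chi\circ N$, which you use implicitly and should state) but is more systematic and would adapt more readily to other residues of $A$ or to $\ell>2$, where explicit cyclotomic-number tables become unwieldy. One presentational gap: you should make explicit that the Davenport--Hasse relation applies because $\eta$ is the norm-lift of $\chi$, i.e.\ $\eta=\chi\circ N_{\mathbb{F}_{p^m}/\mathbb{F}_p}$; without that remark the identity $G_m=(-1)^{m-1}G^m$ is unjustified.
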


\begin{proof}
Since
\begin{equation*}
{T_{A}^{\mathrm{NTU}}}^{( 1 )} ( x )
= \sum_{i = 1}^{p^m - 2} \binom{i}{1} T_{A, i}^{\mathrm{NTU}} x^{i - 1}
= \sum_{i = 0}^{( p^m - 3 ) / 2} T_{A, 2 i + 1}^{\mathrm{NTU}} x^{2 i},
\end{equation*}
we have
\begin{equation*}
{T_{A}^{\mathrm{NTU}}}^{( 1 )} ( 1 ) = \sum_{i = 0}^{( p^m - 3 ) / 2} T_{A, 2 i + 1}^{\mathrm{NTU}}.
\end{equation*}
For $0 \leq j \leq \nu - 1$, put $\boldsymbol{R}_j = ( R_{i \nu + j} )_{0 \leq i \leq p - 2} \in \mathbb{F}_p^{p - 1}$.

Assume that $R_j = 0$.
Since all the terms of $( T_{A, i \nu + j}^{\mathrm{NTU}} )_{0 \leq i \leq p - 2} \in \mathbb{F}_2^{p - 1}$ are $1$,
the number of terms satisfying $T_{A, i \nu + j}^{\mathrm{NTU}} = 1$ and $i \nu + j \equiv 1 \mathrm{~mod~} 2$
is $( p - 1 ) / 2$.

Assume that $R_j \neq 0$. Then
\begin{equation*}
\left\{ R_{2 i \nu + j} \, \middle| \, 0 \leq i \leq \frac{p - 3}{2} \right\} = D_k \quad \mbox{and} \quad
\left\{ R_{( 2 i + 1 ) \nu + j} \, \middle| \, 0 \leq i \leq \frac{p - 3}{2} \right\} = D_{1 - k},
\end{equation*}
where $k = 0$ or $1$.
Hence the number of terms of $( T_{A, i \nu + j}^{\mathrm{NTU}} )_{0 \leq i \leq p - 2} \in \mathbb{F}_2^{p - 1}$
such that $T_{A, i \nu + j}^{\mathrm{NTU}} = 1$ and $i \nu + j \equiv 1 \mathrm{~mod~} 2$
is $d ( 0, 1 ; A )$ or $d ( 1, 1 ; A )$, that is $( p - 3 ) / 4$ by Table \ref{table:diff param}.

On the other hand,
the number of terms of $( R_j )_{0 \leq j \leq \nu - 1} \in \mathbb{F}_p^{\nu}$ such that $R_j = 0$
is $( p^{m - 1} - 1 ) / ( p - 1 )$ (cf. \cite[Lemma 3]{Chan-Games}). Hence
\begin{eqnarray*}
\sum_{i = 0}^{( p^m - 3 ) / 2} T_{A, 2 i + 1}^{\mathrm{NTU}}
& = & \frac{p^{m - 1} - 1}{p - 1} \times \frac{p - 1}{2}
+ \left\{ \nu - \frac{p^{m - 1} - 1}{p - 1} \right\} \times \frac{p - 3}{4} \\
& = & \frac{p^{m - 1} ( p - 1 ) - 2}{4} \\
& = & \left\{ \begin{array}{ll}
0 & \mbox{if~} p \equiv 3 \mathrm{~mod~} 8 \\
1 & \mbox{if~} p \equiv 7 \mathrm{~mod~} 8.
\end{array} \right.
\end{eqnarray*}
\end{proof}

\begin{theorem}
Assume that $\ell = 2, ( A / p ) = - 1, p \equiv 3 \mathrm{~mod~} 4$ and $m \equiv 1 \mathrm{~mod~} 2$. Then
\begin{equation*}
L ( T_{A}^{\mathrm{NTU}} ) =
\left\{
\begin{array}{ll}
\nu \left( L ( t_{A}^{\mathrm{NTU}} ) + 1 \right) - 1 & \mbox{if~} p \equiv 3 \mathrm{~mod~} 8 \\
\nu \left( L ( t_{A}^{\mathrm{NTU}} ) - 1 \right) + 1 & \mbox{if~} p \equiv 7 \mathrm{~mod~} 8.
\end{array}
\right.
\end{equation*}
\end{theorem}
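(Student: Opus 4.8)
\emph{Strategy.}
Since $(A/p)=-1$, the element $A$ lies in $D_1$, so the complement constant in (\ref{equation:Comp NTU rho}) is $k=1$, and over $\mathbb{F}_2$ one has $\overline{T_{A,n}^{\mathrm{NTU}}}=T_{A,n}^{\mathrm{NTU}}+1$ and $\overline{t_{A,n}^{\mathrm{NTU}}}=t_{A,n}^{\mathrm{NTU}}+1$. The plan is to recover $L(T_A^{\mathrm{NTU}})$ from $L(\overline{T_A^{\mathrm{NTU}}})$ by Lemma~\ref{lemma:mult vs LC} applied with $a=1$ and period $N=p^m-1$, to recover $L(t_A^{\mathrm{NTU}})$ from $L(\overline{t_A^{\mathrm{NTU}}})$ in the same way with period $p-1$, and to link the two complement sequences via Lemma~\ref{lemma:Chan-Games}, which applies because $\overline{\rho_A^{\mathrm{NTU}}}(0)=0$ and therefore gives $L(\overline{T_A^{\mathrm{NTU}}})=\nu\,L(\overline{t_A^{\mathrm{NTU}}})$.

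To run Lemma~\ref{lemma:mult vs LC} I first need $\mathrm{ml}_1$ of the polynomials $x^{p^m-1}-1$ and $x^{p-1}-1$. Because $p\equiv 3\bmod 4$ and $m$ is odd, $p^m\equiv 3\bmod 4$, so $2$ exactly divides both $p^m-1$ and $p-1$; writing $p^m-1=2b$ with $b$ odd, over $\mathbb{F}_2$ we have $x^{p^m-1}-1=(x^b-1)^2$ where $x=1$ is a simple root of $x^b-1$ (its Hasse derivative at $1$ is the odd number $b$), hence $\mathrm{ml}_1\,(x^{p^m-1}-1)=2$, and identically $\mathrm{ml}_1\,(x^{p-1}-1)=2$. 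Next I need $\mathrm{ml}_1$ of the sequence polynomials. From the remarks just before Lemma~\ref{lemma:Hasse derivative} we already know $\mathrm{ml}_1\,(T_A^{\mathrm{NTU}}(x))\ge 1$ and $\mathrm{ml}_1\,(t_A^{\mathrm{NTU}}(x))\ge 1$; Lemma~\ref{lemma:Hasse derivative} pins this down for the long sequence, namely $\mathrm{ml}_1\,(T_A^{\mathrm{NTU}}(x))\ge 2$ if $p\equiv 3\bmod 8$ and $\mathrm{ml}_1\,(T_A^{\mathrm{NTU}}(x))=1$ if $p\equiv 7\bmod 8$. For the short sequence I would repeat the argument of Lemma~\ref{lemma:Hasse derivative} in a simpler form: ${t_A^{\mathrm{NTU}}}^{(1)}(1)=\sum_{j=0}^{(p-3)/2}t_{A,2j+1}^{\mathrm{NTU}}$ counts the odd indices $i$ with $g^i+A\in D_1$, and since $\{g^i\mid i\text{ odd}\}=D_1$ this count equals $\#(D_1\cap(D_1-A))=d(1,1;A)=(p-3)/4$ by Table~\ref{table:diff param}; reducing mod $2$ gives ${t_A^{\mathrm{NTU}}}^{(1)}(1)=0$ for $p\equiv 3\bmod 8$ and $=1$ for $p\equiv 7\bmod 8$, so $\mathrm{ml}_1\,(t_A^{\mathrm{NTU}}(x))\ge 2$ resp.\ $=1$.

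Putting these together: if $p\equiv 3\bmod 8$, then $\mathrm{ml}_1\,(x^{p^m-1}-1)=2\le \mathrm{ml}_1\,(T_A^{\mathrm{NTU}}(x))$ and $\mathrm{ml}_1\,(x^{p-1}-1)=2\le \mathrm{ml}_1\,(t_A^{\mathrm{NTU}}(x))$, so part~1 of Lemma~\ref{lemma:mult vs LC} yields $L(\overline{T_A^{\mathrm{NTU}}})=L(T_A^{\mathrm{NTU}})+1$ and $L(\overline{t_A^{\mathrm{NTU}}})=L(t_A^{\mathrm{NTU}})+1$; combined with $L(\overline{T_A^{\mathrm{NTU}}})=\nu\,L(\overline{t_A^{\mathrm{NTU}}})$ this gives $L(T_A^{\mathrm{NTU}})=\nu\bigl(L(t_A^{\mathrm{NTU}})+1\bigr)-1$. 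If $p\equiv 7\bmod 8$, then $\mathrm{ml}_1\,(x^{p^m-1}-1)=\mathrm{ml}_1\,(T_A^{\mathrm{NTU}}(x))+1$ and likewise for the short sequence, so part~2 of Lemma~\ref{lemma:mult vs LC} gives $L(\overline{T_A^{\mathrm{NTU}}})=L(T_A^{\mathrm{NTU}})-1$ and $L(\overline{t_A^{\mathrm{NTU}}})=L(t_A^{\mathrm{NTU}})-1$, whence $L(T_A^{\mathrm{NTU}})=\nu\bigl(L(t_A^{\mathrm{NTU}})-1\bigr)+1$.

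The one delicate point is making sure each multiplicity comparison lands in exactly the intended branch of Lemma~\ref{lemma:mult vs LC}: one must know $\mathrm{ml}_1\,(x^{p^m-1}-1)$ and $\mathrm{ml}_1\,(x^{p-1}-1)$ are \emph{exactly} $2$ (this is where $p\equiv 3\bmod 4$ and $m$ odd enter), and one must check that the lower bound $\ge 2$ is enough in the $p\equiv 3\bmod 8$ case while the sharp value $=1$ is genuinely needed in the $p\equiv 7\bmod 8$ case. Everything else is bookkeeping with the cyclotomic numbers of Table~\ref{table:diff param} and the Hasse-derivative criterion.
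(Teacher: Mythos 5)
Your proposal is correct and follows essentially the same route as the paper: apply Lemma \ref{lemma:Chan-Games} to the complement sequences (where $\overline{\rho_A^{\mathrm{NTU}}}(0)=0$), pin down $\mathrm{ml}_1$ of $x^{p^m-1}-1$ and $x^{p-1}-1$ as exactly $2$ using $p\equiv 3\bmod 4$ and $m$ odd, determine $\mathrm{ml}_1$ of the sequence polynomials via the first Hasse derivative (Lemma \ref{lemma:Hasse derivative} for $T_A^{\mathrm{NTU}}$ and the cyclotomic number $d(1,1;A)=(p-3)/4$ for $t_A^{\mathrm{NTU}}$), and then combine the two applications of Lemma \ref{lemma:mult vs LC}. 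The only difference is presentational: you spell out the case analysis of Lemma \ref{lemma:mult vs LC} more explicitly than the paper does.
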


\begin{proof}
Since $p \equiv 3 \mathrm{~mod~} 4$ and $m \equiv 1 \mathrm{~mod~} 2$, $\mathrm{ml}_1 \, ( x^{p^m - 1} - 1 ) = 2$.
If $p \equiv 3 \mathrm{~mod~} 8$ (resp. if $p \equiv 7 \mathrm{~mod~} 8$),
then $\mathrm{ml}_1 \, ( T_{A}^{\mathrm{NTU}} \, ( x ) ) \geq 2$
(resp. $\mathrm{ml}_1 \, ( T_{A}^{\mathrm{NTU}} \, ( x ) ) = 1$) by Lemma \ref{lemma:Hasse derivative}. Hence, we have
\begin{equation*}
L ( T_{A}^{\mathrm{NTU}} )=
\left\{ \begin{array}{ll}
\nu L ( \overline{t_{A}^{\mathrm{NTU}}} ) - 1 & \mbox{if~} p \equiv 3 \mathrm{~mod~} 8 \\
\nu L ( \overline{t_{A}^{\mathrm{NTU}}} ) + 1 & \mbox{if~} p \equiv 7 \mathrm{~mod~} 8
\end{array} \right.
\end{equation*}
by Lemma \ref{lemma:Chan-Games} and Lemma \ref{lemma:mult vs LC}.
On the other hand, since $p \equiv 3 \mathrm{~mod~} 4$, $\mathrm{ml}_1 \, ( x^{p - 1} - 1 ) = 2$.
By Table \ref{table:diff param},
\begin{eqnarray*}
{t_{A}^{\mathrm{NTU}}}^{( 1 )} ( 1 )
& = & \sum_{i = 0}^{( p - 3 ) / 2} t_{A, 2 i + 1}^{\mathrm{NTU}}
= d ( 1, 1 ; A ) = \frac{p - 3}{4} \\
& = & \left\{\begin{array}{ll}
0 & \mbox{if~} p \equiv 3 \mathrm{~mod~} 8 \\
1 & \mbox{if~} p \equiv 7 \mathrm{~mod~} 8.
\end{array} \right.
\end{eqnarray*}
Hence, if $p \equiv 3 \mathrm{~mod~} 8$ (resp. if $p \equiv 7 \mathrm{~mod~} 8$),
then $\mathrm{ml}_1 \, ( t_{A}^{\mathrm{NTU}} \, ( x ) ) \geq 2$
(resp. $\mathrm{ml}_1 \, ( t_{A}^{\mathrm{NTU}} \, ( x ) ) = 1$).
Therefore the statement follows from Lemma \ref{lemma:mult vs LC}.
\end{proof}

\begin{example}
Let $p = 43, m = 3$ and $\ell = 2$.
%Let $\omega$ be a root of the primitive polynomial $x^3 + 6 x^2 + 16 x + 14 \in \mathbb{F}_p [ x ]$.
Assume that $( A / p ) = - 1$. Then
\begin{eqnarray*}
&& L ( T_{A}^{\mathrm{NTU}} ) = 77612, \, L ( \overline{T_{A}^{\mathrm{NTU}}} ) = 77613, \\
&& L ( t_{A}^{\mathrm{NTU}} ) = 40, \, L ( \overline{t_{A}^{\mathrm{NTU}}} ) = 41, \\
&& \nu = \frac{p^m - 1}{p - 1} = 1893.
\end{eqnarray*}
\end{example}

\begin{example}
Let $p = 47, m = 3$ and $\ell = 2$.
%Let $\omega$ be a root of the primitive polynomial $x^3 + 23 x^2 + 25 x + 9 \in \mathbb{F}_p [ x ]$.
Assume that $( A / p ) = - 1$. Then
\begin{eqnarray*}
&& L ( T_{A}^{\mathrm{NTU}} ) = 99309, \, L ( \overline{T_{A}^{\mathrm{NTU}}} ) = 99308, \\
&& L ( t_{A}^{\mathrm{NTU}} ) = 45, \, L ( \overline{t_{A}^{\mathrm{NTU}}} ) = 44, \\
&& \nu = \frac{p^m - 1}{p - 1} = 2257.
\end{eqnarray*}
\end{example}

\begin{remark}
If $m \equiv 0 \mathrm{~mod~} 2$,
we need to evaluate higher Hasse derivative of $T_{A}^{\mathrm{NTU}} \, ( x )$ at $1$.
In \cite{Heguri-Nogami-Uehara-Tsuchiya}, a formula was conjectured by numerical experiments as follows:
Assume that $\ell = 2, ( A / p ) = - 1, p \equiv 3 \mathrm{~mod~} 4$ and $m \equiv 0 \mathrm{~mod~} 2$. Then
\begin{equation*}
L ( T_{A}^{\mathrm{NTU}} ) =
\left\{
\begin{array}{ll}
\nu \left( L ( t_{A}^{\mathrm{NTU}} ) + 1 \right) & \mbox{if~} p \equiv 3 \mathrm{~mod~} 8 \\
\nu \left( L ( t_{A}^{\mathrm{NTU}} ) - 1 \right) + 1 & \mbox{if~} p \equiv 7 \mathrm{~mod~} 8.
\end{array}
\right.
\end{equation*}
\end{remark}

\subsection{Conditions to Have a Large Linear Complexity}

In order to show conditions to have a large linear complexity, we refer to the result shown by
Meidl and Winterhof \cite{Meidl-Winterhof} and Brandst\"atter and Meidl \cite{Brandstatter-Meidl}.

\begin{lemma}[\cite{Meidl-Winterhof} Proposition 3, \cite{Brandstatter-Meidl} Proposition 1]
Let $r \neq \ell$ be a prime divisor of $p - 1$.
If $\ell$ is a primitive root in $\mathbb{F}_r$ and $r \geq \sqrt{p} + 1$,
then for each $r$th root of unity $\beta \neq 1$ we have $t_{A}^{\mathrm{NTU}} ( \beta ) \neq 0$.
\label{lemma:Meidl-Winterhof}
\end{lemma}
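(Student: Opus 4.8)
The plan is to reduce the statement to the single assertion $t_1^{\mathrm{NTU}}(\beta)\notin\mathbb{F}_\ell$ for the Sidel'nikov sequence of Remark \ref{remark:Sidel'nikov seq}, and then to invoke the estimates of Meidl--Winterhof \cite{Meidl-Winterhof} and Brandst\"atter--Meidl \cite{Brandstatter-Meidl}. First, $r\ge\sqrt p+1$ forces $r\ge3$; since $r$ is then an odd prime divisor of $p-1$, the quotient $(p-1)/r$ is even, so $\beta^{(p-1)/2}=1$. Now fix $A\in D_k$. The complemented sequence $\overline{t_{A}^{\mathrm{NTU}}}$ of (\ref{equation:Comp Sid}) agrees with a cyclic shift of $t_1^{\mathrm{NTU}}$ except at the single index $n$ with $g^n=-A$, where it equals $-k$ instead of $0$; moreover $t_{A}^{\mathrm{NTU}}(x)-\overline{t_{A}^{\mathrm{NTU}}}(x)=k\,(x^{p-1}-1)/(x-1)$, which vanishes at $\beta$ because $\beta^{p-1}=1$ and $\beta\neq1$. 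Combining these two facts and using $\beta^{(p-1)/2}=1$ yields $t_{A}^{\mathrm{NTU}}(\beta)=\beta^{\,s}\bigl(t_1^{\mathrm{NTU}}(\beta)-k\bigr)$ for a suitable exponent $s$. Since $\beta\neq0$ and $k=\rho_{A}^{\mathrm{NTU}}(0)$ runs over all of $\{0,1,\dots,\ell-1\}=\mathbb{F}_\ell$ as $A$ runs over $\mathbb{F}_p\setminus\{0\}$, the lemma is equivalent to $t_1^{\mathrm{NTU}}(\beta)\notin\mathbb{F}_\ell$.

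Next I would recast $t_1^{\mathrm{NTU}}(\beta)\notin\mathbb{F}_\ell$ arithmetically. Let $\psi:\mathbb{F}_p\setminus\{0\}\to\overline{\mathbb{F}_\ell}^{\times}$ be the multiplicative character of order $r$ with $\psi(g)=\beta$, so $t_1^{\mathrm{NTU}}(\beta)=\sum_{x\in\mathbb{F}_p\setminus\{0\}}\rho_1^{\mathrm{NTU}}(x)\psi(x)$. Because $\ell$ is a primitive root in $\mathbb{F}_r$, the polynomial $(x^r-1)/(x-1)$ is irreducible over $\mathbb{F}_\ell$ (equivalently, $\ell$ is inert in $\mathbb{Z}[\zeta_r]$, $\zeta_r$ a primitive complex $r$th root of unity); identifying $\mathbb{F}_\ell[\beta]$ with $\mathbb{Z}[\zeta_r]/\ell\mathbb{Z}[\zeta_r]$ via $\beta\leftrightarrow\zeta_r$ sends $t_1^{\mathrm{NTU}}(\beta)$ to the class of $\Theta:=\sum_{j=0}^{r-1}C_j\,\zeta_r^{\,j}$, where $C_j=\sum_{x\in g^jH}\hat\rho(x)$, $H=\langle g^r\rangle$ is the index-$r$ subgroup of $\mathbb{F}_p\setminus\{0\}$, and $\hat\rho(x)\in\{0,1,\dots,\ell-1\}$ represents $\rho_1^{\mathrm{NTU}}(x)$. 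Reading $\Theta$ off the integral basis $1,\zeta_r,\dots,\zeta_r^{\,r-2}$ shows $t_1^{\mathrm{NTU}}(\beta)\in\mathbb{F}_\ell$ if and only if $C_1\equiv C_2\equiv\cdots\equiv C_{r-1}\pmod\ell$. So everything comes down to showing these $r-1$ coset sums of the cyclotomic-class index are \emph{not} all congruent modulo $\ell$.

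This last step is the crux, and is exactly what \cite[Proposition 3]{Meidl-Winterhof} and \cite[Proposition 1]{Brandstatter-Meidl} provide for $t_1^{\mathrm{NTU}}$ (the same argument covers $\overline{t_A^{\mathrm{NTU}}}$, which is again given by a cyclotomic-class rule). The mechanism: the hypothesis $r\ge\sqrt p+1$ makes every coset $g^jH$ have exactly $(p-1)/r\le\sqrt p-1$ elements, none of them $-1$ when $j\neq0$; expanding the index through the order-$\ell$ multiplicative characters of $\mathbb{F}_p$ writes $C_j=\tfrac{(\ell-1)(p-1)}{2r}+(\text{error})$, the error being a combination of Jacobi-type sums $\sum_{x\in g^jH}\chi^b(x+1)$, each $O(\sqrt p)$ by the Weil bound, with its low-order part controlled by the cyclotomic numbers (Table \ref{table:diff param} and its order-$\ell$ analogues). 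If all $C_j$ ($1\le j\le r-1$) fell into one residue class mod $\ell$, the translated cosets $\{x+1:x\in g^jH\}$ would have to meet the cyclotomic classes in a pattern these evaluations forbid once the cosets are this small — for example, when $(p-1)/r=2$ it would force $1-q$ into a single class for every quadratic residue $q\neq1$, contradicting the cyclotomic-number values $d(0,0;-1)=d(0,1;-1)=(p-3)/4$ of Table \ref{table:diff param}. I expect this is the only genuine obstacle: turning ``the $C_j$ all lie near their common mean and in one residue class mod $\ell$'' into an outright contradiction is not automatic — a crude norm or size bound only yields a near-miss — so one must combine the Weil estimate with the explicit cyclotomic-number values. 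The two reductions above are routine, so in the write-up the honest course is simply to cite \cite{Meidl-Winterhof,Brandstatter-Meidl} once the reduction to $t_1^{\mathrm{NTU}}$ is in place.
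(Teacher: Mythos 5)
The paper contains no proof of this lemma: it is imported verbatim from \cite{Meidl-Winterhof} and \cite{Brandstatter-Meidl}, which treat the Sidel'nikov case $A=1$, and the extension to arbitrary $A\in\mathbb{F}_p\setminus\{0\}$ is asserted in a single sentence after the statement, with no argument. Your first reduction is correct and is genuinely more than the paper supplies: writing $A=g^a\in D_k$, the sequence $\overline{t_A^{\mathrm{NTU}}}$ is the $a$-shift of $t_1^{\mathrm{NTU}}$ except at the single index $n_0$ with $g^{n_0}=-A$, the discrepancy $t_A^{\mathrm{NTU}}(x)-\overline{t_A^{\mathrm{NTU}}}(x)=k(x^{p-1}-1)/(x-1)$ vanishes at $\beta$, and $\beta^{n_0}=\beta^{a+(p-1)/2}=\beta^{a}$ because $r\mid(p-1)/2$, so indeed $t_A^{\mathrm{NTU}}(\beta)=\beta^{a}\bigl(t_1^{\mathrm{NTU}}(\beta)-k\bigr)$. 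Your second reduction (irreducibility of $1+x+\cdots+x^{r-1}$ over $\mathbb{F}_\ell$, hence $t_1^{\mathrm{NTU}}(\beta)\in\mathbb{F}_\ell$ iff your coset sums satisfy $C_1\equiv\cdots\equiv C_{r-1}\pmod{\ell}$, where $H=\langle g^r\rangle$) is also correct.

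The gap is in the hand-off to the literature. The cited propositions conclude $t_1^{\mathrm{NTU}}(\beta)\neq0$, which is the statement that $C_0,C_1,\dots,C_{r-1}$ are not \emph{all} congruent mod $\ell$; your reduction needs the strictly stronger statement that $C_1,\dots,C_{r-1}$ alone are not all congruent (equivalently $t_1^{\mathrm{NTU}}(\beta)\notin\mathbb{F}_\ell$), and the former does not imply the latter (take $C_0$ to be the odd one out). So the claim that the crux ``is exactly what'' the cited propositions provide is not justified by their statements; for $A\in D_k$ with $k\neq0$ --- precisely the case the paper needs in Corollary \ref{corollary:large LC}, where $(A/p)=-1$ --- your argument as written does not close. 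You would either have to verify that the \emph{proofs} (not merely the statements) in \cite{Meidl-Winterhof,Brandstatter-Meidl} deliver the stronger conclusion, or, more robustly, drop the reduction to $A=1$ and observe that those proofs apply verbatim to $t_A^{\mathrm{NTU}}$ for any $A\neq0$: they rest only on $|g^jH|=(p-1)/r\leq\sqrt{p}-1$ together with Weil/Jacobi-sum bounds for $\sum_{x\in g^jH}\chi(x+A)$, which are uniform in $A\neq0$. That is presumably what the authors intend by the unproved remark that Lemma \ref{lemma:Meidl-Winterhof} ``holds for any $A$''. Your sketch of the underlying counting mechanism is, as you yourself concede, not a proof, so it cannot substitute for this check.
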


Lemma \ref{lemma:Meidl-Winterhof} was shown in the case of $A = 1$.
However, Lemma \ref{lemma:Meidl-Winterhof} holds for any $A \in \mathbb{F}_p \setminus \{ 0 \}$.
Therefore, we obtain conditions to have a large linear complexity in the case of $\ell = 2$ as follows:

\begin{corollary}
Assume that $\ell = 2$. Let $p$ be a prime number of the form $p = 2^s r + 1$,
where $r$ is an odd prime number such that $2$ is a primitive root in $\mathbb{F}_r$ and $r \geq \sqrt{p} + 1$.
\begin{enumerate}
\item
If $A = 1$ and $s = 1$ or $( A / p ) = - 1$ and $s \geq 2$,
then the minimal polynomials $m_{t_{A}^{\mathrm{NTU}}} ( x )$ and $m_{T_{A}^{\mathrm{NTU}}} ( x )$
of $t_{A}^{\mathrm{NTU}}$ and $T_{A}^{\mathrm{NTU}}$ are given as
\begin{equation*}
m_{t_{A}^{\mathrm{NTU}}} ( x ) = x^{p - 1} + 1, \, m_{T_{A}^{\mathrm{NTU}}} ( x ) = x^{p^m - 1} + 1,
\end{equation*}
respectively. Therefore the linear complexity $L ( t_{A}^{\mathrm{NTU}} )$ and $L ( T_{A}^{\mathrm{NTU}} )$
of $t_{A}^{\mathrm{NTU}}$ and $T_{A}^{\mathrm{NTU}}$ are given as
\begin{equation*}
L ( t_{A}^{\mathrm{NTU}} ) = p - 1, \, L ( T_{A}^{\mathrm{NTU}} ) = p^m - 1,
\end{equation*}
respectively.
\item
If $( A / p ) = - 1, m \equiv 1 \mathrm{~mod~} 2$ and $p \equiv 7 \mathrm{~mod~} 8$,
then the minimal polynomials $m_{t_{A}^{\mathrm{NTU}}} ( x )$ and $m_{T_{A}^{\mathrm{NTU}}} ( x )$
of $t_{A}^{\mathrm{NTU}}$ and $T_{A}^{\mathrm{NTU}}$ are given as
\begin{equation*}
m_{t_{A}^{\mathrm{NTU}}} ( x ) = \frac{x^{p - 1} + 1}{x + 1}, \,
m_{T_{A}^{\mathrm{NTU}}} ( x ) = \frac{( x + 1 ) ( x^{p^m - 1} + 1 )}{x^{2 \nu} + 1},
\end{equation*}
respectively. Therefore the linear complexity $L ( t_{A}^{\mathrm{NTU}} )$ and $L ( T_{A}^{\mathrm{NTU}} )$
of $t_{A}^{\mathrm{NTU}}$ and $T_{A}^{\mathrm{NTU}}$ are given as
\begin{equation*}
L ( t_{A}^{\mathrm{NTU}} ) = p - 2, \, L ( T_{A}^{\mathrm{NTU}} ) = \frac{p^{m + 1} - 3 p^m + 2}{p - 1},
\end{equation*}
respectively.
\end{enumerate}
\label{corollary:large LC}
\end{corollary}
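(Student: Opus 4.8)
The plan is to determine first the minimal polynomials of the short sequences $t_{A}^{\mathrm{NTU}}$, and then to transfer the information to $T_{A}^{\mathrm{NTU}}$ by the methods already used in this section. Since $\ell = 2$ and $p - 1 = 2^s r$, the only prime divisors of $p - 1$ are $2$ and $r$, and over $\mathbb{F}_2$ we have $x^{p - 1} + 1 = ( x + 1 )^{2^s} \Phi_r ( x )^{2^s}$, where $\Phi_r ( x ) := 1 + x + \cdots + x^{r - 1}$ is irreducible because $2$ is a primitive root in $\mathbb{F}_r$. By Lemma \ref{lemma:Meidl-Winterhof} (valid for every $A \in \mathbb{F}_p \setminus \{ 0 \}$), $t_{A}^{\mathrm{NTU}} ( \beta ) \neq 0$ for every primitive $r$th root of unity $\beta$, so $\Phi_r ( x ) \nmid t_{A}^{\mathrm{NTU}} ( x )$ and hence $\gcd ( x^{p - 1} + 1, t_{A}^{\mathrm{NTU}} ( x ) ) = ( x + 1 )^{\min ( 2^s, \, \mathrm{ml}_1 \, t_{A}^{\mathrm{NTU}} ( x ) )}$. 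Everything thus reduces to $\mathrm{ml}_1 \, t_{A}^{\mathrm{NTU}} ( x )$, for which I use the values $t_{A}^{\mathrm{NTU}} ( 1 ) = ( p - 1 ) / 2 - k$ and ${t_{A}^{\mathrm{NTU}}}^{( 1 )} ( 1 ) = ( p - 3 ) / 4$ already computed in the proofs of the Proposition and the Theorem above.

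For the first assertion: if $A = 1$ and $s = 1$ then $k = 0$ and $( p - 1 ) / 2 = r$ is odd, while if $( A / p ) = - 1$ and $s \geq 2$ then $k = 1$ and $( p - 1 ) / 2 - 1 = 2^{s - 1} r - 1$ is odd; in either case $t_{A}^{\mathrm{NTU}} ( 1 ) \neq 0$, so $\mathrm{ml}_1 \, t_{A}^{\mathrm{NTU}} ( x ) = 0$, $\gcd ( x^{p - 1} + 1, t_{A}^{\mathrm{NTU}} ( x ) ) = 1$, whence $m_{t_{A}^{\mathrm{NTU}}} ( x ) = x^{p - 1} + 1$ and $L ( t_{A}^{\mathrm{NTU}} ) = p - 1$. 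To lift this: if $A = 1$ then $\rho_{1}^{\mathrm{NTU}} ( 0 ) = 0$, so Lemma \ref{lemma:Chan-Games} gives $L ( T_{1}^{\mathrm{NTU}} ) = \nu L ( t_{1}^{\mathrm{NTU}} )$; if $( A / p ) = - 1$ and $s \geq 2$ then $p \equiv 1 \mathrm{~mod~} 4$, so the Proposition above gives $L ( T_{A}^{\mathrm{NTU}} ) = \nu L ( t_{A}^{\mathrm{NTU}} )$. In both cases $L ( T_{A}^{\mathrm{NTU}} ) = \nu ( p - 1 ) = p^m - 1$, which equals the period, so $\gcd ( x^{p^m - 1} + 1, T_{A}^{\mathrm{NTU}} ( x ) ) = 1$ and $m_{T_{A}^{\mathrm{NTU}}} ( x ) = x^{p^m - 1} + 1$.

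For the second assertion, observe that $p \equiv 7 \mathrm{~mod~} 8$ forces $s = 1$ (so $p - 1 = 2 r$ and $r \equiv 3 \mathrm{~mod~} 4$), and, $m$ being odd, $\mathrm{ml}_1 ( x^{p - 1} + 1 ) = \mathrm{ml}_1 ( x^{p^m - 1} + 1 ) = 2$ and $\nu$ is odd. Here $k = 1$, $t_{A}^{\mathrm{NTU}} ( 1 ) = r - 1$ is even, and ${t_{A}^{\mathrm{NTU}}}^{( 1 )} ( 1 ) = ( p - 3 ) / 4$ is odd, so $\mathrm{ml}_1 \, t_{A}^{\mathrm{NTU}} ( x ) = 1$; thus $m_{t_{A}^{\mathrm{NTU}}} ( x ) = ( x^{p - 1} + 1 ) / ( x + 1 )$ and $L ( t_{A}^{\mathrm{NTU}} ) = p - 2$. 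By the Theorem above, $L ( T_{A}^{\mathrm{NTU}} ) = \nu ( L ( t_{A}^{\mathrm{NTU}} ) - 1 ) + 1 = \nu ( p - 3 ) + 1$, hence $\deg \gcd ( x^{p^m - 1} + 1, T_{A}^{\mathrm{NTU}} ( x ) ) = 2 \nu - 1 = \deg \left( ( x^{2 \nu} + 1 ) / ( x + 1 ) \right)$. Since $( x^{2 \nu} + 1 ) / ( x + 1 )$ divides $x^{p^m - 1} + 1$, it suffices to prove that $( x^{2 \nu} + 1 ) / ( x + 1 ) = ( x + 1 ) ( 1 + x + \cdots + x^{\nu - 1} )^2$ divides $T_{A}^{\mathrm{NTU}} ( x )$, i.e.\ that $T_{A}^{\mathrm{NTU}} ( 1 ) = 0$ and $T_{A}^{\mathrm{NTU}} ( \zeta ) = {T_{A}^{\mathrm{NTU}}}^{( 1 )} ( \zeta ) = 0$ for every $\nu$th root of unity $\zeta \neq 1$. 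The first equality is the computation $T_{A}^{\mathrm{NTU}} ( 1 ) = ( p - 1 ) p^{m - 1} / 2 - 1 = 0$ (as $p \equiv 3 \mathrm{~mod~} 4$) from the Proposition above. For the rest I use the Chan--Games decomposition
\begin{equation*}
T_{A}^{\mathrm{NTU}} ( x ) = \left( \sum_{l = 0}^{p - 2} x^{\nu l} \right) \left( \sum_{i \in I_0} x^i \right) + \sum_{i \in I_1} x^i P_i ( x^\nu ),
\end{equation*}
where $I_0 = \{ i : R_i = 0 \}$, $I_1 = \{ i : R_i \neq 0 \}$, and $P_i ( y ) = \sum_{j = 0}^{p - 2} t_{A, c_i + j}^{\mathrm{NTU}} y^j$ with $g^{c_i} = R_i$; this rests only on $R_{n + \nu} = g R_n$. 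Evaluating at $\zeta$ with $\zeta^\nu = 1$, $\zeta \neq 1$: the bracket $\sum_l x^{\nu l}$ specializes to $p - 1 \equiv 0$ and $P_i ( x^\nu )$ to $P_i ( 1 ) = t_{A}^{\mathrm{NTU}} ( 1 ) = 0$, so $T_{A}^{\mathrm{NTU}} ( \zeta ) = 0$; differentiating once and using that $\nu$ and $r$ are odd, one obtains ${T_{A}^{\mathrm{NTU}}}^{( 1 )} ( \zeta ) = \zeta^{- 1} \left( \sum_{i \in I_0} \zeta^i + \sum_{i \in I_1} \sigma_i \zeta^i \right)$, where $\sigma_i \in \{ 0, 1 \}$ is $\sum_{j \text{ odd}} t_{A, c_i + j}^{\mathrm{NTU}}$ reduced modulo $2$.

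The step I expect to be the main obstacle is to show that $\sigma_i$ is odd for every $i \in I_1$, so that the bracket collapses to $\sum_{i = 0}^{\nu - 1} \zeta^i = ( \zeta^\nu - 1 ) / ( \zeta - 1 ) = 0$. Since $\{ g^j : j \text{ odd} \}$ is the set of quadratic non-residues modulo $p$, the set $\{ R_i g^j : j \text{ odd} \}$ equals $D_1$ when $R_i$ is a residue and $D_0$ when $R_i$ is a non-residue, so $\sigma_i = \# ( D_1 \cap ( D_1 - A ) ) = d ( 1, 1 ; A )$ or $\sigma_i = \# ( D_0 \cap ( D_1 - A ) ) = d ( 0, 1 ; A )$. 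By Table \ref{table:diff param}, for $p \equiv 3 \mathrm{~mod~} 4$ and $( A / p ) = - 1$ both of these equal $( p - 3 ) / 4$, which is odd since $p \equiv 7 \mathrm{~mod~} 8$; hence ${T_{A}^{\mathrm{NTU}}}^{( 1 )} ( \zeta ) = 0$. This gives $( x^{2 \nu} + 1 ) / ( x + 1 ) \mid T_{A}^{\mathrm{NTU}} ( x )$, and comparing degrees yields $\gcd ( x^{p^m - 1} + 1, T_{A}^{\mathrm{NTU}} ( x ) ) = ( x^{2 \nu} + 1 ) / ( x + 1 )$, so $m_{T_{A}^{\mathrm{NTU}}} ( x ) = ( x + 1 ) ( x^{p^m - 1} + 1 ) / ( x^{2 \nu} + 1 )$ and $L ( T_{A}^{\mathrm{NTU}} ) = 1 + ( p^m - 1 ) - 2 \nu = ( p^{m + 1} - 3 p^m + 2 ) / ( p - 1 )$.
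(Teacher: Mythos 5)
Your argument is correct. For item 1, and for the determination of $m_{t_{A}^{\mathrm{NTU}}} ( x )$ in both items, you do exactly what the paper (very tersely) does: factor $x^{p - 1} + 1 = ( x + 1 )^{2^s} \Phi_r ( x )^{2^s}$ with $\Phi_r$ irreducible, rule out the $\Phi_r$-part by Lemma \ref{lemma:Meidl-Winterhof}, settle the $( x + 1 )$-multiplicity from the values of $t_{A}^{\mathrm{NTU}} ( 1 )$ and ${t_{A}^{\mathrm{NTU}}}^{( 1 )} ( 1 )$ computed in Sect.~4.2, and lift item 1 through Lemma \ref{lemma:Chan-Games} and the proposition $L ( T_{A}^{\mathrm{NTU}} ) = \nu L ( t_{A}^{\mathrm{NTU}} )$. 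Where you genuinely diverge is the passage from $m_{t_{A}^{\mathrm{NTU}}}$ to $m_{T_{A}^{\mathrm{NTU}}}$ in item 2. The paper disposes of this in two lines: it writes $m_{T_{A}^{\mathrm{NTU}}} ( x ) = ( x + 1 ) m_{\overline{T_{A}^{\mathrm{NTU}}}} ( x )$ and $m_{t_{A}^{\mathrm{NTU}}} ( x ) = ( x + 1 ) m_{\overline{t_{A}^{\mathrm{NTU}}}} ( x )$ (from the multiplicity bookkeeping already done at $x = 1$) and then invokes the minimal-polynomial form of the Chan--Games theorem, $m_{\overline{T_{A}^{\mathrm{NTU}}}} ( x ) = m_{\overline{t_{A}^{\mathrm{NTU}}}} ( x^{\nu} )$, cited from \cite{Chan-Games}. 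You avoid that citation: you use the linear-complexity theorem of Sect.~4.2 only to pin down $\deg \gcd ( x^{p^m - 1} + 1, T_{A}^{\mathrm{NTU}} ( x ) ) = 2 \nu - 1$, and then exhibit the divisor $( x^{2 \nu} + 1 ) / ( x + 1 )$ by hand, evaluating $T_{A}^{\mathrm{NTU}} ( x )$ and its first Hasse derivative at the $\nu$th roots of unity via the decimation decomposition and the difference parameters $d ( 0, 1 ; A ) = d ( 1, 1 ; A ) = ( p - 3 ) / 4$. In effect you re-prove the special case of the Chan--Games substitution result that is needed; this makes the corollary self-contained (your computation is a nice extension of Lemma \ref{lemma:Hasse derivative} from the point $x = 1$ to all $\nu$th roots of unity), at the cost of length. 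Two small points you should make explicit: (i) the parity condition on $l$ in the inner sum of ${T_{A}^{\mathrm{NTU}}}^{( 1 )} ( \zeta )$ is $l \not\equiv i \mathrm{~mod~} 2$, hence alternates with $i$; your uniform choice ``$j$ odd'' is legitimate only because the even and odd half-sums of $( t_{A, c_i + j}^{\mathrm{NTU}} )_j$ agree modulo $2$, their total being $t_{A}^{\mathrm{NTU}} ( 1 ) = 0$; (ii) the number of $l \in \{ 0, \dots , p - 2 \}$ of a fixed parity is $r$, which is where the oddness of $r$ enters for the $I_0$-part.
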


\begin{proof}
We only show the case that $( A / p ) = - 1, m \equiv 1 \mathrm{~mod~} 2$ and $p \equiv 7 \mathrm{~mod~} 8$
(the proofs for the other cases are trivial).
By the argument in the previous subsection,
\begin{equation*}
m_{t_{A}^{\mathrm{NTU}}} ( x ) = ( x + 1 ) m_{\overline{t_{A}^{\mathrm{NTU}}}} ( x ), \,
m_{T_{A}^{\mathrm{NTU}}} ( x ) = ( x + 1 ) m_{\overline{T_{A}^{\mathrm{NTU}}}} ( x ).
\end{equation*}
Noting that $m_{\overline{T_{A}^{\mathrm{NTU}}}} ( x ) = m_{\overline{t_{A}^{\mathrm{NTU}}}} ( x^{\nu} )$
(cf. \cite[p.551, Theorem]{Chan-Games}), we have
\begin{equation*}
m_{T_{A}^{\mathrm{NTU}}} ( x ) = \frac{x + 1}{x^{\nu} + 1} m_{t_{A}^{\mathrm{NTU}}} ( x^{\nu} )
= \frac{( x + 1 ) ( x^{p^m - 1} + 1 )}{x^{2 \nu} + 1}.
\end{equation*}
\end{proof}

\begin{remark}
For the case of $\ell = 3$,
we can obtain some conditions to have a large linear complexity by \cite[Theorem 1]{Brandstatter-Meidl}.
\end{remark}

\section{Interleaved Sequences of Generalized NTU Sequences}
\label{section:Interleaved Sequences of Generalized NTU Sequences}

Generalized NTU sequences have good properties for linear complexity as discussed in the previous section,
however they do not have the balance property.
In order to construct sequences that have the balance property,
we introduce interleaved sequences as in the case of NTU sequences \cite{Tsuchiya-Nogami-Uehara}.
Gong \cite{Gong1995} introduced the concept of interleaved sequences that were obtained by merging sequences, and
sequence interleaving was widely used to improve the balance property of existing sequences
such as in \cite{Tang-Ding,Gu-Hwang-Han-Kim-Jin,Gong2002,Edemskiy,Chung-Yang,Zeng-Zeng-Zhang-Xuan,Yu-Gong}.

In this section, we propose interleaved sequences of a generalized NTU sequence and its complement.
By interleaving them, the interleaved sequence has the balance property by the interleaved structure.
Furthermore, we show periodic autocorrelation and linear complexity of the interleaved sequences.

Let $p, m, \omega, R = ( R_n )_{n \geq 0}, \nu, g, \ell$
and $D_j, \, 0 \leq j \leq \ell - 1$ be as in Sect. \ref{section:Generalized NTU Sequences}.
Let $A \in \mathbb{F}_p$.
Let $\rho_{A}^{\mathrm{NTU}} : \mathbb{F}_p \rightarrow \mathbb{F}_{\ell}$ be a map defined as (\ref{equation:NTU rho}) and
$T_{A}^{\mathrm{NTU}} = ( T_{A, n}^{\mathrm{NTU}} )_{n \geq 0}$ a sequence defined as (\ref{equation:NTU}).
Let $\overline{\rho_{A}^{\mathrm{NTU}}} : \mathbb{F}_p \rightarrow \mathbb{F}_{\ell}$
be a map defined as (\ref{equation:Comp NTU rho}) and
$\overline{T_{A}^{\mathrm{NTU}}} = ( \overline{T_{A, n}^{\mathrm{NTU}}} )_{n \geq 0}$
a sequence defined as (\ref{equation:Comp NTU}).
Throughout this section, assume that $\ell = 2$.
Put $N = p^m - 1$.

In the case of $A = 0$, interleaved sequences of NTU sequences were studied in \cite{Tsuchiya-Nogami-Uehara}.
We apply a similar argument to the case of $A \neq 0$.

\subsection{Interleaved Sequences of Generalized NTU Sequences}

For $e \in \{ 0, 1, \dots , N - 1 \}$,
we define a sequence $S^e = ( S^e_n )_{ n \geq 0 }$ as the interleaved sequence of
$\left\{ T_{A}^{\mathrm{NTU}}, L^{e} \left( \overline{T_{A}^{\mathrm{NTU}}} \right) \right\}$.

\begin{example}
Let $p = 3$ and $m = 2$, so that $N = 8$.
Let $\omega$ be a root of the primitive polynomial $x^2 + 2 x + 2$.
Then $T_{A}^{\mathrm{NTU}}$ is given as
\begin{equation*}
T_{A}^{\mathrm{NTU}} = ( 0, 1, 0, 1, 1, 0, 0, 0, \dots ).
\end{equation*}
Put $e = 2$. Then $S^e = S^2$ is given as
\begin{equation*}
S^2 = ( 0, 1, 1, 0, 0, 0, 1, 1, 1, 1, 0, 1, 0, 1, 0, 0, \dots ).
\end{equation*}
\end{example}

\subsection{Periodic Autocorrelation of the Proposed Sequences}

We show some lemmas to obtain the periodic autocorrelation of the proposed sequences.

\begin{lemma}
For $\tau \in \{ 0, 1, \dots , N - 1 \}$,
\begin{equation*}
R_{T_{A}^{\mathrm{NTU}}, \overline{T_{A}^{\mathrm{NTU}}}} ( \tau ) = - R_{T_{A}^{\mathrm{NTU}}} ( \tau ).
\end{equation*}
\label{lemma:CC of T and T^-}
\end{lemma}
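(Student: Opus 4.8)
The plan is to deduce this identity by a direct substitution, reducing the cross-correlation between $T_{A}^{\mathrm{NTU}}$ and $\overline{T_{A}^{\mathrm{NTU}}}$ to the autocorrelation of $T_{A}^{\mathrm{NTU}}$. The key observation is that, since $\ell = 2$, the map $\overline{\rho_{A}^{\mathrm{NTU}}}$ of (\ref{equation:Comp NTU rho}) differs from $\rho_{A}^{\mathrm{NTU}}$ by the additive constant $1 \in \mathbb{F}_2$, so by (\ref{equation:Comp NTU}) we have $\overline{T_{A,n}^{\mathrm{NTU}}} = T_{A,n}^{\mathrm{NTU}} + 1$ for every $n \geq 0$; in other words $\overline{T_{A}^{\mathrm{NTU}}}$ is the bitwise complement of $T_{A}^{\mathrm{NTU}}$. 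In particular $\overline{T_{A}^{\mathrm{NTU}}}$ is again periodic of period $N = p^m - 1$, so the cross-correlation $R_{T_{A}^{\mathrm{NTU}}, \overline{T_{A}^{\mathrm{NTU}}}}(\tau)$ is well defined and the index $i + \tau$ may be reduced modulo $N$ consistently in both sequences.

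First I would substitute this relation into the definition of the periodic cross-correlation: for $\tau \in \{ 0, 1, \dots , N - 1 \}$,
\[
R_{T_{A}^{\mathrm{NTU}}, \overline{T_{A}^{\mathrm{NTU}}}}(\tau)
= \sum_{i=0}^{N-1} (-1)^{T_{A,i}^{\mathrm{NTU}} + \overline{T_{A,i+\tau}^{\mathrm{NTU}}}}
= \sum_{i=0}^{N-1} (-1)^{T_{A,i}^{\mathrm{NTU}} + T_{A,i+\tau}^{\mathrm{NTU}} + 1}
= - \sum_{i=0}^{N-1} (-1)^{T_{A,i}^{\mathrm{NTU}} + T_{A,i+\tau}^{\mathrm{NTU}}},
\]
where I used that $\overline{T_{A,i+\tau}^{\mathrm{NTU}}} = T_{A,i+\tau}^{\mathrm{NTU}} + 1$ in $\mathbb{F}_2$ to pull the factor $(-1)^{1} = -1$ out of the sum. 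The remaining sum is exactly $R_{T_{A}^{\mathrm{NTU}}}(\tau)$ by definition, which gives the claimed identity.

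There is essentially no hard step here: the proof rests only on (i) the fact that replacing $\rho_{A}^{\mathrm{NTU}}$ by $\overline{\rho_{A}^{\mathrm{NTU}}}$ amounts to adding the constant $1$ over $\mathbb{F}_2$, which is immediate from (\ref{equation:Comp NTU rho}) and (\ref{equation:Comp NTU}), and (ii) that $T_{A}^{\mathrm{NTU}}$ and $\overline{T_{A}^{\mathrm{NTU}}}$ have the common period $N$, so that both arguments of the correlation sum range over the same block of $N$ indices. This lemma is a preparatory step: cross-correlations of $T_{A}^{\mathrm{NTU}}$ with cyclic shifts of $\overline{T_{A}^{\mathrm{NTU}}}$ will appear when I later compute the periodic autocorrelation of the interleaved sequence $S^e$, and the identity above lets me express all of them through the already-known autocorrelation distribution of $T_{A}^{\mathrm{NTU}}$.
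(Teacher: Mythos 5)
Your proof is correct and follows essentially the same route as the paper: both substitute $\overline{T_{A,i+\tau}^{\mathrm{NTU}}} = T_{A,i+\tau}^{\mathrm{NTU}} + 1$ into the cross-correlation sum (implicitly using that in this setting $k=1$, so the complement map adds the constant $1$ over $\mathbb{F}_2$). The only cosmetic difference is that you finish by factoring $(-1)^{1}=-1$ out of the sum, whereas the paper reaches the same conclusion by recounting the agreement and disagreement sets; your version is, if anything, slightly more direct.
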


\begin{proof}
It follows that
\begin{eqnarray*}
R_{T_{A}^{\mathrm{NTU}}, \overline{T_{A}^{\mathrm{NTU}}}} ( \tau )
& = & \sum_{i = 0}^{N - 1} ( - 1 )^{T_{A, i}^{\mathrm{NTU}} + \overline{T_{A, i + \tau}^{\mathrm{NTU}}}}
= \sum_{i = 0}^{N - 1} ( - 1 )^{T_{A, i}^{\mathrm{NTU}} + T_{A, i + \tau}^{\mathrm{NTU}} + 1} \\
& = & \# \left\{ i \mid T_{A, i}^{\mathrm{NTU}} = T_{A, i + \tau}^{\mathrm{NTU}} + 1, \, 0 \leq i \leq N - 1 \right\} \\
& & - \# \left\{ i \mid T_{A, i}^{\mathrm{NTU}} \neq T_{A, i + \tau}^{\mathrm{NTU}} + 1, \, 0 \leq i \leq N - 1 \right\}  \\
& = & \# \left\{ i \mid T_{A, i}^{\mathrm{NTU}} \neq T_{A, i + \tau}^{\mathrm{NTU}}, \, 0 \leq i \leq N - 1 \right\} \\
& & - \# \left\{ i \mid T_{A, i}^{\mathrm{NTU}} = T_{A, i + \tau}^{\mathrm{NTU}}, \, 0 \leq i \leq N - 1 \right\} \\
& = & - R_{T_{A}^{\mathrm{NTU}}} ( \tau ).
\end{eqnarray*}
\end{proof}

\begin{lemma}
If $\tau = 2 \tau_0, \tau_0 \in \{ 0, 1, \dots , N - 1 \}$, then
\begin{equation*}
R_{S^{e_1}, S^{e_2}} ( \tau ) = R_{T_{A}^{\mathrm{NTU}}} ( \tau_0 ) + R_{T_{A}^{\mathrm{NTU}} } ( e_2 - e_1 + \tau_0 ).
\end{equation*}
If $\tau = 2 \tau_0 + 1, \tau_0 \in \{ 0, 1, \dots , N - 1 \}$, then
\begin{equation*}
R_{S^{e_1}, S^{e_2}} ( \tau ) = - R_{T_{A}^{\mathrm{NTU}}} ( e_2 + \tau_0 ) - R_{T_{A}^{\mathrm{NTU}}} ( e_1 - \tau_0 - 1 ).
\end{equation*}
\label{lemma:description of R(tau) by T}
\end{lemma}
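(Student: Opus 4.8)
The plan is to compute $R_{S^{e_1}, S^{e_2}}(\tau) = \sum_{n = 0}^{2N - 1} (-1)^{S^{e_1}_n + S^{e_2}_{n + \tau}}$ directly from the definition, splitting the index set according to the parity of $n$ and treating the two parities of $\tau$ separately. Recall that $S^e$ is interleaved from two sequences of period $N$, hence is periodic of period $2N$, with $S^e_{2 j} = T_{A, j}^{\mathrm{NTU}}$ and $S^e_{2 j + 1} = L^e(\overline{T_A^{\mathrm{NTU}}})_j = \overline{T_{A, j + e}^{\mathrm{NTU}}} = T_{A, j + e}^{\mathrm{NTU}} + 1$, where all subscripts are read modulo $N$.

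First I would handle the case $\tau = 2 \tau_0$. Since adding an even number preserves parity, the sum over $n \in \{ 0, \dots , 2N - 1 \}$ breaks into the contribution of the even indices $n = 2 j$ and that of the odd indices $n = 2 j + 1$, each with $j$ running over a complete residue system modulo $N$. The even part equals $\sum_{j = 0}^{N - 1} (-1)^{T_{A, j}^{\mathrm{NTU}} + T_{A, j + \tau_0}^{\mathrm{NTU}}} = R_{T_A^{\mathrm{NTU}}}(\tau_0)$. In the odd part the two complement bits add to $0$ in $\mathbb{F}_2$, leaving $\sum_{j = 0}^{N - 1} (-1)^{T_{A, j + e_1}^{\mathrm{NTU}} + T_{A, j + \tau_0 + e_2}^{\mathrm{NTU}}}$; the substitution $j \mapsto j - e_1$ turns this into $R_{T_A^{\mathrm{NTU}}}(e_2 - e_1 + \tau_0)$, which gives the first asserted identity.

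Next, for $\tau = 2 \tau_0 + 1$, adding an odd number flips parity: an even index $n = 2 j$ is paired with the odd index $n + \tau = 2 (j + \tau_0) + 1$, and an odd index $n = 2 j + 1$ with the even index $n + \tau = 2 (j + \tau_0 + 1)$. The first of these contributes $\sum_{j = 0}^{N - 1} (-1)^{T_{A, j}^{\mathrm{NTU}} + T_{A, j + \tau_0 + e_2}^{\mathrm{NTU}} + 1} = - R_{T_A^{\mathrm{NTU}}}(e_2 + \tau_0)$; note that this is precisely $R_{T_A^{\mathrm{NTU}}, \overline{T_A^{\mathrm{NTU}}}}(e_2 + \tau_0)$, so Lemma \ref{lemma:CC of T and T^-} may be invoked here in place of the direct computation. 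The second contributes $\sum_{j = 0}^{N - 1} (-1)^{T_{A, j + e_1}^{\mathrm{NTU}} + 1 + T_{A, j + \tau_0 + 1}^{\mathrm{NTU}}}$; the substitution $j \mapsto j - \tau_0 - 1$ rewrites this as $- R_{T_A^{\mathrm{NTU}}}(e_1 - \tau_0 - 1)$, and summing the two pieces yields the second asserted identity.

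The only point requiring care is the index bookkeeping: after each re-indexing one must check that the summation variable still runs over a full set of residues modulo $N$, and that all shifts in the resulting correlations are interpreted modulo $N$ as in the convention fixed in Sect.\,\ref{section:Preliminaries}. There is no conceptual obstacle here; the lemma is a bookkeeping consequence of the interleaved structure together with the relation $\overline{T_{A, n}^{\mathrm{NTU}}} = T_{A, n}^{\mathrm{NTU}} + 1$.
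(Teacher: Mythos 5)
Your proof is correct and takes essentially the same route as the paper's: both decompose the correlation sum of the interleaved sequences by the parity of the index and reduce each half to a correlation or cross-correlation of $T_{A}^{\mathrm{NTU}}$ and its complement, using $\overline{T_{A, n}^{\mathrm{NTU}}} = T_{A, n}^{\mathrm{NTU}} + 1$. You simply write out the index bookkeeping that the paper leaves implicit, and as you note, the odd-index pieces are exactly the cross-correlations handled by Lemma \ref{lemma:CC of T and T^-}.
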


\begin{proof}
Assume that $\tau = 2 \tau_0, \tau_0 \in \{ 0, 1, \dots , N - 1 \}$.
Then
\begin{eqnarray*}
R_{ S^{e_1}, S^{e_2} } ( \tau )
& = & R_{T_{A}^{\mathrm{NTU}}} ( \tau_0 ) + R_{\overline{T_{A}^{\mathrm{NTU}}}} ( e_2 - e_1 + \tau_0 ) \\
& = & R_{T_{A}^{\mathrm{NTU}}} ( \tau_0 ) + R_{T_{A}^{\mathrm{NTU}}} ( e_2 - e_1 + \tau_0 ).
\end{eqnarray*}

Assume that $\tau = 2 \tau_0 + 1, \tau_0 \in \{ 0, 1, \dots , N - 1 \}$.
By Lemma \ref{lemma:CC of T and T^-},
\begin{eqnarray*}
R_{ S^{e_1}, S^{e_2} } ( \tau )
& = & R_{T_{A}^{\mathrm{NTU}}, \overline{T_{A}^{\mathrm{NTU}}}} ( e_2 + \tau_0 )
+ R_{\overline{T_{A}^{\mathrm{NTU}}}, T_{A}^{\mathrm{NTU}}} ( \tau_0 + 1 - e_1 ) \\
& = & R_{T_{A}^{\mathrm{NTU}}, \overline{T_{A}^{\mathrm{NTU}}}} ( e_2 + \tau_0 )
+ R_{T_{A}^{\mathrm{NTU}}, \overline{T_{A}^{\mathrm{NTU}}}} ( e_1 - \tau_0 - 1 ) \\
& = & - R_{T_{A}^{\mathrm{NTU}}} ( e_2 + \tau_0 ) - R_{T_{A}^{\mathrm{NTU}}} ( e_1 - \tau_0 - 1 ).
\end{eqnarray*}
\end{proof}

The periodic autocorrelation of the proposed sequence is obtained as follows:

\begin{theorem}
For $e \in \{ 0, 1, \dots , N - 1 \}$, let $S^e$ be the interleaved sequence of
$\left\{ T_{A}^{\mathrm{NTU}}, L^e \left( \overline{T_{A}^{\mathrm{NTU}}} \right) \right\}$.
For $\tau = 2 \tau_0, \tau_0 \in \{ 0, 1, \dots , N - 1 \}$,
the periodic autocorrelation $R_{S^e} ( \tau )$ of $S^e$ is given as
\begin{equation}
\left\{
\begin{array}{ll}
2 N & \mbox{if~} \tau_0 = 0 \\
2 N_1^{( j )} & \mbox{if~} \tau_0 = j \nu, \, 1 \leq j \leq p - 2 \\
2 N_2 & \mbox{otherwise}.
\end{array}
\right.
\label{equation:AC for even tau}
\end{equation}
For $\tau = 2 \tau_0 + 1, \tau_0 \in \{ 0, 1, \dots , N - 1 \}$,
the periodic autocorrelation $R_{S^e} ( \tau )$ of $S^e$ is given as
\begin{equation}
\left\{
\begin{array}{ll}
- N - N_2 & \mbox{if~} \tau_0 \equiv - e, e - 1 \mathrm{~mod~} N \\
- N_1^{( j )} - N_2 & \mbox{if~} \tau_0 \equiv - e + j \nu, e - 1 - j \nu \mathrm{~mod~} N, \\ & 1 \leq j \leq p - 2 \\
- 2 N_2 & \mbox{otherwise}
\end{array}
\right.
\label{equation:AC for odd tau 1}
\end{equation}
if $2 e \not\equiv 1 + j \nu \mathrm{~mod~} N$ for any $j \in \{ 1, \dots , p - 2 \}$, and
\begin{equation}
\left\{
\begin{array}{ll}
- N - N_1^{( j_0 )} & \mbox{if~} \tau_0 \equiv - e, e - 1 \mathrm{~mod~} N \\
- N_1^{( j )} - N_1^{( j_0 - j \mathrm{~mod~} p - 1 )}
& \mbox{if~} \tau_0 \equiv - e + j \nu, e - 1 - j \nu  \mathrm{~mod~} N, \\
& 1 \leq j \leq p - 2, \, j \neq j_0 \\
- 2 N_2 & \mbox{otherwise}
\end{array}
\right.
\label{equation:AC for odd tau 2}
\end{equation}
if there exists a $j_0 \in \{ 1, \dots , p - 2 \}$ such that $2 e \equiv 1 + j_0 \nu \mathrm{~mod~} N$.
Here $N_1^{( j )}$ and $N_2$ are defined as (\ref{eqnarray:def of N_1}) and (\ref{eqnarray:def of N_2}), respectively.
\label{theorem:AC}
\end{theorem}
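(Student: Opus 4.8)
The plan is to reduce everything to Lemma~\ref{lemma:description of R(tau) by T}, which already expresses $R_{S^e}(\tau)$ in terms of values $R_{T_A^{\mathrm{NTU}}}(\cdot)$ of the ordinary autocorrelation of the generalized NTU sequence, and then to substitute the known three-valued distribution of $R_{T_A^{\mathrm{NTU}}}$ recalled in Sect.~\ref{section:Generalized NTU Sequences}. First I would treat the even case $\tau = 2\tau_0$: by the first formula of Lemma~\ref{lemma:description of R(tau) by T} (with $e_1 = e_2 = e$, so $e_2 - e_1 = 0$) we get $R_{S^e}(\tau) = 2 R_{T_A^{\mathrm{NTU}}}(\tau_0)$, and plugging in the distribution $N, N_1^{(j)}, N_2$ according to whether $\tau_0 = 0$, $\tau_0 = j\nu$, or otherwise gives exactly \eqref{equation:AC for even tau}. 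This case is essentially immediate.

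The odd case $\tau = 2\tau_0 + 1$ is where the real work lies. Here Lemma~\ref{lemma:description of R(tau) by T} gives
\[
R_{S^e}(\tau) = -\,R_{T_A^{\mathrm{NTU}}}(e + \tau_0) \;-\; R_{T_A^{\mathrm{NTU}}}(e - \tau_0 - 1),
\]
a sum of two autocorrelation values at the arguments $a := e + \tau_0$ and $b := e - \tau_0 - 1$ (both taken mod $N$). I would classify each of $a$ and $b$ into one of the three types ($\equiv 0$, $\equiv j\nu$ with $1 \le j \le p-2$, or neither), and read off the contribution from $\{N, N_1^{(j)}, N_2\}$. The key combinatorial point is to understand the interaction: $a + b \equiv 2e - 1 \pmod N$, so the two arguments are not independent. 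If $a \equiv 0$, i.e. $\tau_0 \equiv -e$, then $b \equiv -1 - 2\tau_0 \equiv 2e - 1 \pmod N$; one then asks whether $b$ is $0$, a multiple of $\nu$, or generic — this is precisely the dichotomy recorded in the theorem via the hypothesis ``$2e \equiv 1 + j_0\nu \bmod N$ for some $j_0$'' (equivalently $2e - 1$ is a nonzero multiple of $\nu$). So I would split the whole argument at the top level according to whether $2e - 1 \bmod N$ is a (nonzero) multiple of $\nu$ or not, and within each branch run the case analysis on $\tau_0$.

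Concretely, in the branch where $2e \not\equiv 1 + j\nu$ for all $j$: if $\tau_0 \equiv -e$ or $\tau_0 \equiv e - 1 \pmod N$ then exactly one of $a, b$ is $\equiv 0$ (contributing $-N$) and the other is generic (contributing $-N_2$), giving $-N - N_2$; if $\tau_0 \equiv -e + j\nu$ or $e - 1 - j\nu$ then one of $a,b$ is $\equiv j\nu$ (contributing $-N_1^{(j)}$) and the other is generic, giving $-N_1^{(j)} - N_2$; and otherwise both are generic, giving $-2N_2$ — this is \eqref{equation:AC for odd tau 1}. In the branch where $2e \equiv 1 + j_0\nu$: when $a \equiv 0$ then $b \equiv j_0\nu$, so $\tau_0 \equiv -e$ (or symmetrically $\tau_0 \equiv e-1$) yields $-N - N_1^{(j_0)}$; when $a \equiv j\nu$ with $j \ne j_0$ then $b \equiv (j_0 - j)\nu$ with $j_0 - j \not\equiv 0$, using that $a \equiv j\nu \iff \tau_0 \equiv -e + j\nu$, and hence the value is $-N_1^{(j)} - N_1^{(j_0 - j \bmod p-1)}$; otherwise both generic, $-2N_2$ — this is \eqref{equation:AC for odd tau 2}. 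I expect the main obstacle to be purely bookkeeping: correctly tracking the two arc-parameters $a, b$ modulo $N$, verifying that the index conditions stated in the theorem (``$\tau_0 \equiv -e + j\nu$ or $e - 1 - j\nu$'') are exactly the solution sets of ``$a$ or $b$ is $\equiv j\nu$'', and checking that the two families of conditions indexed by $j$ and by the two signs are mutually disjoint (so the cases in \eqref{equation:AC for odd tau 1}, \eqref{equation:AC for odd tau 2} are well defined), which uses $1 \le j \le p-2$ and that $\nu \mid N$; no hard number theory is needed beyond what Table~\ref{table:diff param} and the stated autocorrelation distribution already supply.
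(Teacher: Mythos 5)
Your proposal is correct and follows essentially the same route as the paper: both reduce to Lemma \ref{lemma:description of R(tau) by T}, substitute the known three-valued autocorrelation distribution of $T_{A}^{\mathrm{NTU}}$, and split the odd case according to whether $2e \equiv 1 + j_0\nu \bmod N$, which is exactly the condition governing when the four residues $-e, -e+j\nu, e-1, e-1-j\nu$ collide. The bookkeeping you flag (disjointness of the cases, and the coincidence $-e + j_0\nu \equiv e-1$, $-e \equiv e-1-j_0\nu$) is precisely what the paper's proof checks.
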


\begin{proof}
Assume that $\tau = 2 \tau_0, \tau_0 \in \{ 0, 1, \dots , N - 1 \}$.
Since $R_{S^e} ( \tau ) = 2 R_{T_{A}^{\mathrm{NTU}}} ( \tau_0 )$
by Lemma \ref{lemma:description of R(tau) by T}, $R_{S^e} ( \tau )$ is given as (\ref{equation:AC for even tau}).

Assume that $\tau = 2 \tau_0 + 1, \tau_0 \in \{ 0, 1, \dots , N - 1 \}$.
Then we have $R_{S^e} ( \tau ) = - R_{T_{A}^{\mathrm{NTU}}} ( e + \tau_0 ) - R_{T_{A}^{\mathrm{NTU}}} ( e - \tau_0 - 1 )$
by Lemma \ref{lemma:description of R(tau) by T}.
On the other hand,
\begin{eqnarray*}
R_{T_{A}^{\mathrm{NTU}}} ( e + \tau_0 ) & = &
\left\{
\begin{array}{ll}
N & \mbox{if~} \tau_0 \equiv - e \mathrm{~mod~} N \\
N_1^{( j )} & \mbox{if~} \tau_0 \equiv - e + j \nu \mathrm{~mod~} N, 1 \leq j \leq p - 2 \\
N_2 & \mbox{otherwise}
\end{array}
\right.
\end{eqnarray*}
and
\begin{eqnarray*}
R_{T_{A}^{\mathrm{NTU}}} ( e - \tau_0 - 1 ) & = &
\left\{
\begin{array}{ll}
N & \mbox{if~} \tau_0 \equiv e - 1 \mathrm{~mod~} N \\
N_1^{( j )} & \mbox{if~} \tau_0 \equiv e - 1 - j \nu \mathrm{~mod~} N, 1 \leq j \leq p - 2 \\
N_2 & \mbox{otherwise}.
\end{array}
\right.
\end{eqnarray*}
If $2 e \not\equiv 1 + j \nu \mathrm{~mod~} N$ for any $j \in \{ 1, \dots , p - 2 \}$,
then $- e, - e + j \nu, e - 1, e - 1 - j \nu$ are incongruent modulo $N$.
Hence $R_{S^e} ( \tau )$ is given as (\ref{equation:AC for odd tau 1}).
If there exists a $j_0 \in \{ 1, \dots , p - 2 \}$ such that $2 e \equiv 1 + j_0 \nu \mathrm{~mod~} N$,
then $- e + j_0 \nu \equiv e - 1 \mathrm{~mod~} N$ and $- e \equiv e - 1 - j_0 \nu \mathrm{~mod~} N$.
Hence $R_{S^e} ( \tau )$ is given as (\ref{equation:AC for odd tau 2}).
\end{proof}

\begin{corollary}
$S^e$ has period $2 N = 2 ( p^m - 1 )$.
\end{corollary}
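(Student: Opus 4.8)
The plan is to read the period off the autocorrelation already computed in Theorem \ref{theorem:AC}. First I would record the general principle: $S^e$ is the interleaved sequence of two sequences of period $N$ assembled with $T=2$ components, so $2N$ is a period of $S^e$ and the least period $P$ divides $2N$; moreover $R_{S^e}(\tau)=\sum_{i=0}^{2N-1}(-1)^{S^e_i+S^e_{i+\tau}}$ equals $2N$ exactly when $S^e_i=S^e_{i+\tau}$ for all $i$, i.e. exactly when $P\mid\tau$. Hence it suffices to check that $R_{S^e}(\tau)\neq 2N$ for every $\tau\in\{1,2,\dots,2N-1\}$, which forces $P=2N$.

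Next I would run through the cases of Theorem \ref{theorem:AC}. For even $\tau=2\tau_0$ with $1\le\tau_0\le N-1$, formula (\ref{equation:AC for even tau}) gives $R_{S^e}(\tau)\in\{2N_1^{(j)},2N_2\}$, so I must rule out $N_1^{(j)}=N$ and $N_2=N$; the latter is clear from (\ref{eqnarray:def of N_2}) since $N_2=p^{m-2}-1<p^m-1=N$, and for the former I would use that $N_1^{(j)}=R_{T_A^{\mathrm{NTU}}}(j\nu)$ is a value of the autocorrelation of $T_A^{\mathrm{NTU}}$ at the nonzero shift $j\nu$ ($1\le j\le p-2$), hence $|N_1^{(j)}|<N$. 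For odd $\tau=2\tau_0+1$, formulas (\ref{equation:AC for odd tau 1}) and (\ref{equation:AC for odd tau 2}) list values of the shape $-N-N_2$, $-N_1^{(j)}-N_2$, $-2N_2$, $-N-N_1^{(j_0)}$ and $-N_1^{(j)}-N_1^{(j_0-j\bmod p-1)}$; using $|N_1^{(\cdot)}|<N$, $0\le N_2<N$, one sees that every such value is $<2N$ (the ones containing a summand $N$ are in fact $\le 0$, and $-N_1^{(j)}-N_1^{(j_0-j)}<2N$ because $N_1^{(\cdot)}>-N$). Thus $R_{S^e}(\tau)\neq 2N$ throughout $\{1,\dots,2N-1\}$, and $P=2N$. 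An alternative route avoiding Theorem \ref{theorem:AC} entirely is to split on the parity of $P$: if $P$ is even, the even-indexed subsequence of $S^e$ is $T_A^{\mathrm{NTU}}$, so $T_A^{\mathrm{NTU}}$ has period $P/2$ and $N\mid P/2\mid N$, giving $P=2N$; if $P$ were odd then $P\mid N$ and comparing $S^e_{2j}$ with $S^e_{2j+P}$ would force $T_{A,j}^{\mathrm{NTU}}=\overline{T_{A,\,j+e+(P-1)/2}^{\mathrm{NTU}}}$ for all $j$, i.e. $R_{T_A^{\mathrm{NTU}}}(c)=-N$ for some $c$, contradicting the autocorrelation distribution of $T_A^{\mathrm{NTU}}$ recalled in Section \ref{section:Generalized NTU Sequences}.

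The single nontrivial input, and the step I expect to be the main obstacle, is the strict inequality $|N_1^{(j)}|<N$ for all $j\in\{1,\dots,p-2\}$ (equivalently $R_{T_A^{\mathrm{NTU}}}(j\nu)\neq\pm N$). Writing $N_1^{(j)}=p^{m-1}c-1$ with $c\in\{-3,-1,1,3\}$ from (\ref{eqnarray:def of N_1}), the crude bound $|N_1^{(j)}|\le 3p^{m-1}-1<p^m-1$ settles this at once for $p\ge 5$, and $N_1^{(j)}=-N$ is impossible for all $p$ since it would require $p^{m-1}\mid 2$. The genuinely delicate point is that for $p=3$ one can have $c=p=3$, so $N_1^{(j)}=N$ may occur (it does, precisely when $(A/p)=-1$), in which case $T_A^{\mathrm{NTU}}$ — and hence $S^e$ — in fact has a period strictly shorter than claimed; so the $p=3$ case must either be excluded or handled separately, and I would flag this explicitly in the statement.
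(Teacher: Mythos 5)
Your proof is correct and follows the route the paper intends (the corollary is stated without proof, immediately after Theorem \ref{theorem:AC}): since $2N$ is a period by the interleaved construction, the least period is read off from the fact that $R_{S^e}(\tau)=2N$ forces the shift $\tau$ to be a multiple of the least period, and the listed autocorrelation values rule this out for $0<\tau<2N$. Your flagged exception is genuine and not a defect of your argument but of the statement: for $p=3$ and $(A/p)=-1$ one has $N_1^{(1)}=3^{m}-1=N$, so $R_{T_A^{\mathrm{NTU}}}(\nu)=N$ and $T_A^{\mathrm{NTU}}$ already has least period dividing $\nu$ (e.g.\ $p=3$, $m=2$, $A=2$ gives $T_A^{\mathrm{NTU}}=(0,0,1,0,0,0,1,0,\dots)$ of least period $4$), whence $S^e$ has least period at most $N<2N$; the corollary, like the earlier assertion that $T_A^{\mathrm{NTU}}$ has period $p^m-1$, should exclude this degenerate case or restrict to $p\geq 5$.
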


By the interleaved structure, $S^e$ has the balance property,
namely the number of zeros and ones in one period of $S^e$ are $N$.

\begin{remark}
If $m$ is even, then $2 e \not\equiv 1 + j \nu \mathrm{~mod~} N$ for any $j \in \{ 1, \dots , p - 2 \}$.
\end{remark}

\begin{example}
Let $p = 7$ and $m = 2$, so that $N = 48$.
%Let $\omega$ be a root of the primitive polynomial $x^2 + a x + b$.
Figure \ref{figure:AC in p = 7, m = 2, e = 2} describes the graph of the periodic autocorrelation of $S^2$.
\begin{figure}[tb]
\begin{center}
\includegraphics[scale = 0.55]{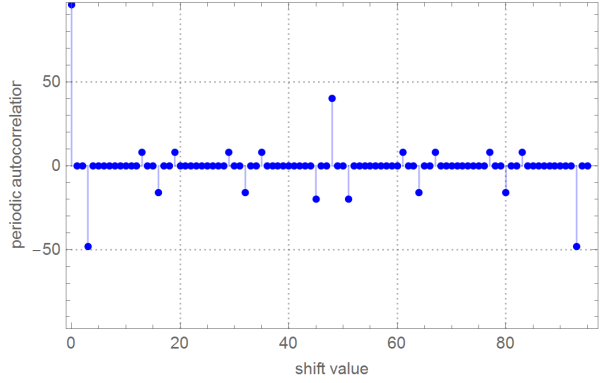}
\end{center}
\caption{The graph of the periodic autocorrelation of $S^2$ in the case of $p = 7, m = 2$.}
\label{figure:AC in p = 7, m = 2, e = 2}
\end{figure}
\end{example}

\begin{example}
Let $p = 5$ and $m = 3$, so that $N = 124$.
%Let $\omega$ be a root of the primitive polynomial $x^3 + a x^2 + b x + c$.
Figure \ref{figure:AC in p = 5, m = 3, e = 6} and Fig. \ref{figure:AC in p = 5, m = 3, e = 16}
describe the graphs of the periodic autocorrelation of $S^6$ and $S^{16}$, respectively.
Note that $\nu = 31$ and $2 \cdot 16 = 1 + 1 \cdot 31 = 32$.
\begin{figure}[tb]
\begin{center}
\includegraphics[scale = 0.55]{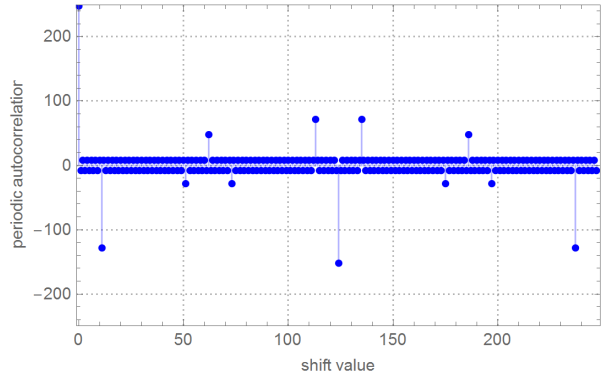}
\end{center}
\caption{The graph of the periodic autocorrelation of $S^6$ in the case of $p = 5, m = 3$.}
\label{figure:AC in p = 5, m = 3, e = 6}
\end{figure}
\begin{figure}[tb]
\begin{center}
\includegraphics[scale = 0.55]{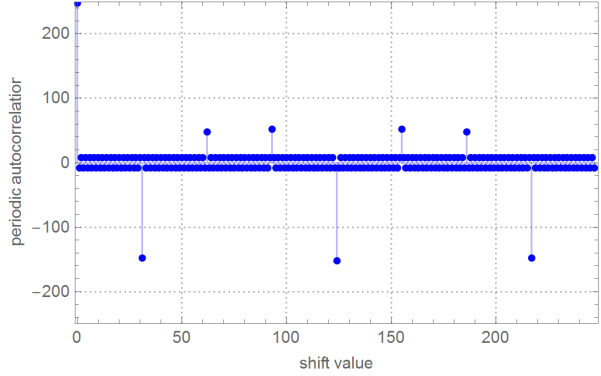}
\end{center}
\caption{The graph of the periodic autocorrelation of $S^{16}$ in the case of $p = 5, m = 3$.}
\label{figure:AC in p = 5, m = 3, e = 16}
\end{figure}
\end{example}

\subsection{Linear Complexity of the Proposed Sequences}

The minimal polynomial and linear complexity of the proposed sequence are obtained as follows:

\begin{theorem}
For $e \in \{ 0, 1, \dots , N - 1 \}$, let $S^e$ be the interleaved sequence of
$\left\{ T_{A}^{\mathrm{NTU}}, L^e \left( \overline{T_{A}^{\mathrm{NTU}}} \right) \right\}$.
Assume that $p$ is a prime number of the form $p = 2^s r + 1$,
where $r$ is an odd prime number such that $2$ is a primitive root in $\mathbb{F}_r$ and $r \geq \sqrt{p} + 1$.
\begin{enumerate}
\item
If $A = 1$ and $s = 1$ or $( A / p ) = - 1$ and $s \geq 2$,
then the minimal polynomial $m_{S^e} ( x )$ of $S^e$ is given as
\begin{equation*}
m_{S^e} ( x ) = \frac{x^{2 N} + 1}{x^{G ( N, e )} + 1},
\end{equation*}
where $G ( N, e ) = \mathrm{gcd} \, ( - 2 e + 1 ~\underline{\mathrm{mod}}~ N / 2^{\nu_2 ( N )}, N / 2^{\nu_2 ( N )} )$ and
$\nu_2 ( N )$ is the exponent of the largest power of $2$ which divides $N$.
Therefore the linear complexity $L ( S^e )$ of $S^e$ is given as
\begin{equation*}
L ( S^e ) = 2 N - G ( N, e ) = 2 L ( T_{A}^{\mathrm{NTU}} ) - G ( N, e ).
\end{equation*}
\item
If $( A / p ) = - 1, m \equiv 1 \mathrm{~mod~} 2$ and $p \equiv 7 \mathrm{~mod~} 8$,
then the minimal polynomial $m_{S^e} ( x )$ of $S^e$ is given as
\begin{equation*}
m_{S^e} ( x ) = \frac{( x^{2 N} + 1 ) ( x^2 + 1 ) ( x^{H_0 ( \nu, e )} + 1 )}{( x^{4 \nu} + 1 ) ( x^{H_1 ( N, e )} + 1 )},
\end{equation*}
where $H_0 ( \nu, e ) = \mathrm{gcd} \, ( - 2 e + 1 ~\underline{\mathrm{mod}}~ \nu, \nu )$ and
$H_1 ( N, e ) = \mathrm{gcd} \, ( - 2 e + 1 ~\underline{\mathrm{mod}}~ N / 2 , N / 2 )$.
Therefore the linear complexity $L ( S^e )$ of $S^e$ is given as
\begin{eqnarray*}
L ( S^e ) & = & 2 N + 2 - 4 \nu + H_0 ( \nu, e ) - H_1 ( N, e ) \\
& = & 2 L ( T_{A}^{\mathrm{NTU}} ) + H_0 ( \nu, e ) - H_1 ( N, e ).
\end{eqnarray*} 
\end{enumerate}
\label{theorem:LC of int seq}
\end{theorem}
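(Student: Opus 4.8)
The plan is to reduce everything to one greatest common divisor. Since $S^e$ has period $2N$, the identities $m_{S^e}(x) = (x^{2N}+1)/\gcd(x^{2N}+1, S^e(x))$ and $L(S^e) = 2N - \deg\gcd(x^{2N}+1, S^e(x))$ from Sect.\,\ref{section:Preliminaries} reduce the theorem to computing $\gcd(x^{2N}+1, S^e(x))$, noting that over $\mathbb{F}_2$ one has $x^{2N}+1 = (x^N+1)^2$. The first step is to rewrite the generating polynomial. Put $\phi(x) := T_A^{\mathrm{NTU}}(x)$. The interleaving formula gives $S^e(x) = T_A^{\mathrm{NTU}}(x^2) + x\,(L^e(\overline{T_A^{\mathrm{NTU}}}))(x^2)$; using that the complement adds the all-ones polynomial, so $(L^e(\overline{T_A^{\mathrm{NTU}}}))(x) = (L^e(T_A^{\mathrm{NTU}}))(x) + (x^N+1)/(x+1)$, the Frobenius identity $h(x^2) = h(x)^2$ over $\mathbb{F}_2$, the shift relation $(L^e(T_A^{\mathrm{NTU}}))(x) \equiv x^{N-e}\phi(x) \pmod{x^N+1}$ (whose $(x^N+1)$-multiple part, after squaring, becomes a multiple of $(x^N+1)^2$ and drops), and the collapse $(x^{2N}+1)/(x^2+1) = ((x^N+1)/(x+1))^2$, one obtains
\begin{equation*}
S^e(x) \equiv \phi(x)^2\bigl(1 + x^{2(N-e)+1}\bigr) + x\left(\frac{x^N+1}{x+1}\right)^{2} \pmod{(x^N+1)^2}.
\end{equation*}
As $\gcd(x^{2N}+1, S^e(x))$ depends only on $S^e(x) \bmod (x^N+1)^2$, I may compute it from the right-hand side.

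The second step is to evaluate $\gcd((x^N+1)^2, \cdot)$ of this expression factor by factor over the distinct irreducible divisors $p_i(x)$ of $x^N+1$. Let $v := \nu_2(N)$, so $x^N+1 = \prod_i p_i(x)^{2^v}$, $x^{2N}+1 = \prod_i p_i(x)^{2^{v+1}}$, and the distinct roots of $x^N+1$ are the $N'$th roots of unity with $N' := N/2^v$; note $v \geq 1$ always. Two inputs feed the computation. First, the multiplicities $\mathrm{ml}_{p_i}(\phi)$: Corollary \ref{corollary:large LC} gives $m_{T_A^{\mathrm{NTU}}}(x)$ explicitly, hence $\gcd(x^N+1, \phi(x)) = (x^N+1)/m_{T_A^{\mathrm{NTU}}}(x)$, which equals $1$ in case (1) and $(x^{2\nu}+1)/(x+1)$ in case (2). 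Thus $\mathrm{ml}_{p_i}(\phi) = 0$ for all $i$ in case (1); in case (2) — where one first checks from $p \equiv 7 \bmod 8$ and $m$ odd that $v = 1$, that $\nu$ is odd, and that $\nu \mid N/2$ — one gets $\mathrm{ml}_{x+1}(\phi) = 1$, $\mathrm{ml}_{p_i}(\phi) \geq 2$ for $p_i \mid x^\nu+1$ with $p_i \neq x+1$, and $\mathrm{ml}_{p_i}(\phi) = 0$ otherwise. Second, two elementary facts: $1 + x^{2(N-e)+1}$ is separable (odd exponent), so $\mathrm{ml}_{p_i}(1 + x^{2(N-e)+1}) \in \{0,1\}$, equal to $1$ exactly when $p_i \mid x^{g_0}+1$, where $g_0 := \gcd(-2e+1 ~\underline{\mathrm{mod}}~ N', N')$ (so $g_0 = G(N,e)$ in case (1) and $g_0 = H_1(N,e)$ in case (2)); and $(x^N+1)/(x+1)$ contains $p_i$ to the power $2^v$ if $p_i \neq x+1$ and to the power $2^v-1$ if $p_i = x+1$.

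Combining these via "$\mathrm{ml}$ of a product is the sum, of a sum the minimum when the two terms differ" and capping each multiplicity at $2^{v+1}$: for every $p_i \neq x+1$ the term $x((x^N+1)/(x+1))^2$ is $\equiv 0 \pmod{p_i^{2^{v+1}}}$, so the relevant multiplicity is $\min\bigl(2^{v+1},\,2\,\mathrm{ml}_{p_i}(\phi) + \mathrm{ml}_{p_i}(1 + x^{2(N-e)+1})\bigr)$, while for $p_i = x+1$ this quantity ($= 2\,\mathrm{ml}_{x+1}(\phi)+1$) is compared with $\mathrm{ml}_{x+1}\bigl(x((x^N+1)/(x+1))^2\bigr) = 2^{v+1}-2$ and the smaller wins. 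In case (1) this gives $\gcd(x^{2N}+1, S^e(x)) = x^{G(N,e)}+1$ at once. In case (2), after the simplifications $(x+1)^2\bigl((x^\nu+1)/(x+1)\bigr)^4 = (x^{4\nu}+1)/(x^2+1)$ and $\gcd(x^{g_0}+1, x^\nu+1) = x^{H_0(\nu,e)}+1$, it gives $\gcd(x^{2N}+1, S^e(x)) = (x^{4\nu}+1)(x^{H_1(N,e)}+1)\big/\bigl((x^2+1)(x^{H_0(\nu,e)}+1)\bigr)$. Dividing $x^{2N}+1$ by these and computing degrees yields $m_{S^e}(x)$ and $L(S^e)$; using $L(T_A^{\mathrm{NTU}}) = N$ in case (1) and $2L(T_A^{\mathrm{NTU}}) = 2N+2-4\nu$ in case (2) (both from Corollary \ref{corollary:large LC}) rewrites $L(S^e)$ in the claimed shape.

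I expect the bookkeeping of the second step in case (2) to be the main obstacle: one has to split the irreducible divisors of $x^N+1$ into the three types $x+1$, $p_i \mid x^\nu+1$ with $p_i \neq x+1$, and the rest; keep track of how $\mathrm{ml}_{p_i}(\phi^2)$, $\mathrm{ml}_{p_i}(1+x^{2(N-e)+1})$, and $\mathrm{ml}_{p_i}(x((x^N+1)/(x+1))^2)$ add and compete inside the sum; and establish the auxiliary arithmetic ($\nu_2(p^m-1)=1$, $2 \nmid \nu$, and $\nu \mid N/2$) that keeps every multiplicity small enough for the caps at $2^{v+1}$ to act as stated. The remaining manipulations — the generating-function identity of the first step and the final polynomial algebra — are routine.
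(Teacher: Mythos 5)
Your proposal is correct and follows essentially the same route as the paper: the same generating-function identity $S^e(x) \equiv (x^{2N-2e+1}+1)\,T_{A}^{\mathrm{NTU}}(x^2) + x^{2N-2e+1}(x^{2N}+1)/(x^2+1) \bmod (x^{2N}+1)$, the same inputs from Corollary \ref{corollary:large LC}, and the same reduction to computing $\gcd(x^{2N}+1, S^e(x))$. The only difference is organizational: you evaluate that gcd by local multiplicities at each irreducible factor of $x^N+1$ (with the caps at $2^{\nu_2(N)+1}$), whereas the paper factors out $\varphi(x^2)$ and writes the remaining gcd directly as $(x^{\gcd(2N-2e+1,2N)}+1)/(x^{\gcd(2N-2e+1,4\nu)}+1)$; your bookkeeping is more explicit but lands in the same place.
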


\begin{proof}
Since
\begin{eqnarray*}
L^e \left( \overline{T_{A}^{\mathrm{NTU}}} \right) ( x )
& \equiv & x^{N - e} \overline{T_{A}^{\mathrm{NTU}}} ( x ) \mathrm{~mod~} x^N + 1 \\
& \equiv & x^{N - e} T_{A}^{\mathrm{NTU}} ( x ) + x^{N - e} \cdot \frac{x^N + 1}{x + 1} \mathrm{~mod~} x^N + 1,
\end{eqnarray*}
we have
\begin{eqnarray*}
S^e ( x )
& = & T_{A}^{\mathrm{NTU}} ( x^2 ) + x L^e \left( \overline{T_{A}^{\mathrm{NTU}}} \right) ( x^2 ) \\
& \equiv & T_{A}^{\mathrm{NTU}} ( x^2 ) + x \left\{ x^{2 N - 2 e} T_{A}^{\mathrm{NTU}} ( x^2 )
+ x^{2 N - 2 e} \cdot \frac{x^{2 N} + 1}{x^2 + 1} \right\} \mathrm{~mod~} x^{2 N} + 1 \\
& \equiv & ( x^{2 N - 2 e + 1} + 1 ) T_{A}^{\mathrm{NTU}} ( x^2 ) + x^{2 N - 2 e + 1} \cdot \frac{x^{2 N} + 1}{x^2 + 1}
\mathrm{~mod~} x^{2 N} + 1.
\end{eqnarray*}

Assume that $A = 1$ and $s = 1$ or $( A / p ) = - 1$ and $s \geq 2$.
By Corollary \ref{corollary:large LC}, $L ( T_{A}^{\mathrm{NTU}} ) = N$.
Since $2 N - 2 e + 1 \equiv 1 \mathrm{~mod~} 2$, $x^2 + 1$ does not divide $x^{2 N - 2 e + 1} + 1$.
Hence it follows that
\begin{eqnarray*}
\mathrm{gcd} \, ( x^{2 N} + 1, S^e ( x ) ) & = & \mathrm{gcd} \, ( x^{2 N} + 1, x^{2 N - 2 e + 1} + 1 ) \\
& = & x^{\mathrm{gcd} \, ( 2 N , 2 N - 2 e + 1 )} + 1 \\
& = & x^{G ( N, e )} + 1.
\end{eqnarray*}
Therefore we have
\begin{equation*}
m_{S^e} ( x ) = \frac{x^{2N} + 1}{\mathrm{gcd} \, ( x^{2 N} + 1, S^e ( x ) )} = \frac{x^{2 N} + 1}{x^{G ( N, e )} + 1},
\end{equation*}
and
\begin{eqnarray*}
L ( S^e ) & = & 2 N - \deg \mathrm{gcd} \, ( x^{2 N} + 1, S^e ( x ) ) \\
& = & 2 N - \deg \left( x^{G ( N, e )} + 1 \right) \\
& = & 2 N - G ( N, e ).
\end{eqnarray*}

Assume that $( A / p ) = - 1, m \equiv 1 \mathrm{~mod~} 2$ and $p \equiv 7 \mathrm{~mod~} 8$.
Put $\varphi ( x ) = \mathrm{gcd} \, ( x^N + 1, T_{A}^{\mathrm{NTU}} ( x ) ) = ( x^{2 \nu} + 1 ) / ( x + 1 )$ and
$\psi ( x ) = T_{A}^{\mathrm{NTU}} ( x ) / \varphi ( x )$.
By Corollary \ref{corollary:large LC},
$\mathrm{ml}_1 \, ( \varphi ( x ) ) = \mathrm{ml}_1 \, ( m_{T_{A}^{\mathrm{NTU}}} ( x ) ) = 1,
\mathrm{ml}_1 \, ( \psi ( x ) ) = 0$ and $\mathrm{gcd} \, ( m_{T_{A}^{\mathrm{NTU}}} ( x ), \psi ( x ) ) = 1$.
Then
\begin{eqnarray*}
& & S^e ( x ) \\
& \equiv & \varphi ( x^2 ) \left\{ ( x^{2 N - 2 e + 1} + 1 ) \psi ( x^2 )
+ x^{2 N - 2 e + 1} \cdot \frac{m_{T_{A}^{\mathrm{NTU}}} ( x^2 )}{x^2 + 1} \right\} \mathrm{~mod~} x^{2 N} + 1 \\
& \equiv & \varphi ( x^2 ) \left\{ ( x^{2 N - 2 e + 1} + 1 ) \psi ( x^2 )
+ x^{2 N - 2 e + 1} \cdot \frac{x^{2 N} + 1}{x^{4 \nu} + 1} \right\} \mathrm{~mod~} x^{2 N} + 1.
\end{eqnarray*}
Hence it follows that
\begin{eqnarray*}
\mathrm{gcd} \, ( x^{2 N} + 1, S^{e} (x) ) & = & \varphi ( x^2 ) \cdot
\frac{x^{\mathrm{gcd} \, ( 2 N - 2 e + 1, 2 N )} + 1}{x^{\mathrm{gcd} \, ( 2 N - 2 e + 1, 4 \nu )} + 1} \\
& = & \varphi ( x^2 ) \cdot \frac{x^{H_1 ( N, e )} + 1}{x^{H_0 ( \nu, e )} + 1 } \\
& = & \frac{( x^{4 \nu} + 1 ) ( x^{H_1 ( N, e )} + 1 )}{( x^2 + 1 ) ( x^{H_0 ( \nu, e )} + 1 )}.
\end{eqnarray*}
Therefore we have
\begin{equation*}
m_{S^e} ( x ) = \frac{x^{2N} + 1}{\mathrm{gcd} \, ( x^{2 N} + 1, S^e ( x ) )}
= \frac{( x^{2 N} + 1 ) ( x^2 + 1 ) ( x^{H_0 ( \nu, e )} + 1 )}{( x^{4 \nu} + 1 ) ( x^{H_1 ( N, e )} + 1 )},
\end{equation*}
and
\begin{eqnarray*}
L ( S^e ) & = & 2 N - \deg \mathrm{gcd} \, ( x^{2 N} + 1, S^e ( x ) ) \\
& = & 2 N + 2 - 4 \nu + H_0 ( \nu, e ) - H_1 ( N, e ) \\
& = & 2 L ( T_{A}^{\mathrm{NTU}} ) + H_0 ( \nu, e ) - H_1 ( N, e ).
\end{eqnarray*}
\end{proof}

\begin{corollary}
Assume that $S^e$ satisfies the conditions in the first case of Theorem \ref{theorem:LC of int seq},
that is $L ( T_{A}^{\mathrm{NTU}} ) = N$.
Then the upper and lower bounds on the linear complexity $L( S^e )$ are obtained as follows:
\begin{enumerate}
\item $L( S^e ) \leq 2 N - 1$.
The equality holds if and only if $G ( N, e ) = 1$.
\item $L( S^e ) \geq 2 N - N / 2^{\nu_2 ( N )}$.
The equality holds if and only if $- 2 e + 1 \equiv 0 \mathrm{~mod~} N / 2^{\nu_2 ( N )}$.
\end{enumerate}
\end{corollary}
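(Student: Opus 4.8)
The plan is to read off everything from the formula $L(S^e) = 2N - G(N,e)$ already established in the first case of Theorem~\ref{theorem:LC of int seq}, and then to analyze the single arithmetic quantity $G(N,e)$. Write $N' = N / 2^{\nu_2 ( N )}$ for the odd part of $N$; since $p$ is odd, $N = p^m - 1$ is even, so $\nu_2 ( N ) \geq 1$ and $1 \leq N' < N$. By definition $G ( N, e ) = \mathrm{gcd} \, ( M, N' )$, where $M = ( - 2 e + 1 ) ~\underline{\mathrm{mod}}~ N'$ is the unique integer of $\{ 1, 2, \dots , N' \}$ congruent to $- 2 e + 1$ modulo $N'$.

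First I would prove the upper bound. Since $M \geq 1$ and $N' \geq 1$, the greatest common divisor $G ( N, e ) = \mathrm{gcd} \, ( M, N' )$ is a positive integer, hence $G ( N, e ) \geq 1$ and $L ( S^e ) = 2 N - G ( N, e ) \leq 2 N - 1$. Equality in this inequality is equivalent to $G ( N, e ) = 1$, which is exactly the stated characterization.

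Next I would prove the lower bound. As $G ( N, e ) = \mathrm{gcd} \, ( M, N' )$ divides $N'$, we have $G ( N, e ) \leq N'$, so $L ( S^e ) = 2 N - G ( N, e ) \geq 2 N - N' = 2 N - N / 2^{\nu_2 ( N )}$. Equality holds precisely when $G ( N, e ) = N'$, i.e. when $N' \mid M$. Since $M \equiv - 2 e + 1 \pmod{N'}$, this is equivalent to $- 2 e + 1 \equiv 0 \pmod{N'}$, that is $- 2 e + 1 \equiv 0 \mathrm{~mod~} N / 2^{\nu_2 ( N )}$; conversely, when this congruence holds the $\underline{\mathrm{mod}}$ convention forces $M = N'$, hence $G ( N, e ) = \mathrm{gcd} \, ( N', N' ) = N'$.

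The whole argument is a short consequence of Theorem~\ref{theorem:LC of int seq}, so there is no real obstacle; the only point that deserves a moment of care is the equality case of the lower bound, where one must use that $a ~\underline{\mathrm{mod}}~ n$ takes values in $\{ 1, \dots , n \}$ (so the residue class of $0$ is represented by $n$, not $0$) in order to get the clean ``if and only if''.
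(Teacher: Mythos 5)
Your proof is correct and follows the same (essentially unique) route as the paper, which states this corollary as an immediate consequence of the formula $L(S^e) = 2N - G(N,e)$ from Theorem \ref{theorem:LC of int seq} without further argument: both bounds come from $1 \leq G(N,e) \leq N/2^{\nu_2(N)}$ since $G(N,e)$ is a positive divisor of the odd part of $N$. Your care with the $\underline{\mathrm{mod}}$ convention in the equality case of the lower bound is exactly the right point to check, and it is handled correctly.
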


\begin{example}
Let $p = 17$ and $m = 2$, so that $N = 288 = 2^5 \cdot 3^2$.
Assume that $( A / p ) = - 1$.
Then the linear complexity $L( S^e )$ of $S^e$ is given as
\begin{equation*}
L ( S^e ) = \left\{
\begin{array}{ll}
567 & \mbox{if~} e \equiv 5 \mathrm{~mod~} 9 \\
573 & \mbox{if~} e \equiv 2 \mathrm{~mod~} 9 \\
575 & \mbox{otherwise}.
\end{array}
\right.
\end{equation*}
\end{example}

\begin{example}
Let $p = 7$ and $m = 3$, so that $N = 342 = 2 \cdot 3^2 \cdot 19$.
Assume that $( A / p ) = - 1$.
Then the linear complexity $L( S^e )$ of $S^e$ is given as
\begin{equation*}
L ( S^e ) = \left\{
\begin{array}{ll}
344 & \mbox{if~} e \equiv 86 \mathrm{~mod~} 171 \\
452 & \mbox{if~} e \equiv 5 \mathrm{~mod~} 9, e \not\equiv 86 \mathrm{~mod~} 171 \\
458 & \mbox{otherwise}.
\end{array}
\right.
\end{equation*}
Note that $H_0 ( \nu, e )$ and $H_1 ( N, e )$ are given as in Tabel \ref{table:LC} if $e \equiv 5 \mathrm{~mod~} 9$.

\begin{table}[tb]
\caption{The values of $H_0 ( \nu, e )$ and $H_1 ( N, e )$ in the case of $e \equiv 5 \mathrm{~mod~} 9$.}
\label{table:LC}
\begin{center}
\begin{tabular}{ccc}
\hline\noalign{\smallskip}
$e$ & $H_0 ( \nu, e )$ & $H_1 ( N, e )$ \\
\noalign{\smallskip}\hline\noalign{\smallskip}
$e \equiv 86 \mathrm{~mod~} 171$ & $57$ & $171$ \\
$e \equiv 5 \mathrm{~mod~} 9, e \not\equiv 86 \mathrm{~mod~} 171$ & $3$ & $9$ \\
\noalign{\smallskip}\hline
\end{tabular}
\end{center}
\end{table}
\end{example}

\section{Conclusions}
\label{section:Conclusions}

In this paper, we first investigate linear complexity of generalized NTU sequences,
which are geometric sequences defined by cyclotomic classes.
In the case that the Chan--Games formula does not hold,
we show the new formula by considering the sequence of complement numbers, Hasse derivative and cyclotomic classes.
Under some conditions, we can ensure that the generalized NTU sequences have a large linear complexity
from the results on linear complexity of Sidel'nikov sequences.  

Next, we propose interleaved sequences of a binary generalized NTU sequence and its complement.
Furthermore, we show the periodic autocorrelation and linear complexity of the proposed sequences.
The proposed sequences have the balance property and
double the period of the generalized NTU sequences by the interleaved structure.
The autocorrelation distributions of the proposed sequences have a few peaks and troughs.
The question is whether a security is affected by these distributions.
We obtain the linear complexity of the proposed sequences
in the case that the generalized NTU sequences have a large linear complexity.
In particular, if the linear complexity of the generalized NTU sequence attains the maximal value,
the formula induces the upper and lower bounds on the linear complexity,
and the conditions to attain the upper and lower bounds, respectively.
Therefore, although the linear complexity do not attain the maximal value,
one can choose a shift width with which a sequence has a large linear complexity, especially such as the period minus one.
Thus the proposed sequences have good cryptographic properties for linear complexity.

In this paper, we only consider the frequency for a single symbol,
however the proposed sequences do not have the balance property for block of length larger than one.
Applying a linear transformation, one can expect to obtain well balanced sequence for block of a certain length.
However, the correlation properties are not kept.
As a future work, we should give a transformation,
by which transformed sequences have the balance property and keep some good properties of the propesed sequences.

\section*{Acknowledgments}

This research was supported by JSPS KAKENHI Grant-in-Aid for Scientific Research (A) Number 16H01723.

\bibliographystyle{abbrv}% bib style
\bibliography{GenNTU}% your bib database
%\begin{thebibliography}{99}% more than 9 --> 99 / less than 10 --> 9
%\bibitem{}
%\end{thebibliography}

\end{document}